\newenvironment{talign}
 {\align}
 {\endalign}
\newenvironment{talign*}
 {\csname align*\endcsname}
 {\endalign}
\def \({\left(}
\def \){\right)}
\def \[{\left[}
\def \]{\right]}
\newcommand{\bu}{{\textbf {u}}}
\newcommand{\bU}{{\textbf {U}}}
\newcommand{\bV}{{\textbf {V}}}
\newcommand{\bF}{{\textbf {F}}}
\newcommand{\bY}{{\textbf {Y}}}
\newcommand{\bW}{{\textbf {W}}}
\newcommand{\bZ}{{\textbf {Z}}}
\newcommand{\bw}{{\textbf {w}}}
\newcommand{\bv}{{\textbf {v}}}
\newcommand{\bX}{{\textbf {X}}}
\newcommand{\be}{\begin{equation}}
\newcommand{\ee}{\end{equation}}
\newcommand{\bea}{\begin{eqnarray}}
\newcommand{\eea}{\end{eqnarray}}
\newcommand\smallO{
  \mathchoice
    {{\scriptstyle\mathcal{O}}}
    {{\scriptstyle\mathcal{O}}}
    {{\scriptscriptstyle\mathcal{O}}}
    {\scalebox{.7}{$\scriptscriptstyle\mathcal{O}$}}
  }
\newtheorem{theorem}{Theorem}[section]
\newtheorem{lemma}[theorem]{\textbf{Lemma}}
\newtheorem{thm}[theorem]{\textbf{Theorem}}
\newtheorem{remark}[theorem]{\textbf{Remark}}
\newtheorem{proposition}[theorem]{\textbf{Proposition}}
\DeclareMathAlphabet{\varmathbb}{U}{bbold}{m}{n}
\newcommand{\EE}{\mathbb{E}}
\newcommand*{\QEDA}{\hfill\ensuremath{\blacksquare}}
\title{\LARGE \bf The Layered Structure of Tensor Estimation \\and its Mutual Information}
\author{Jean Barbier$^{\dagger\Diamond}$, Nicolas Macris$^{\dagger}$ and L\'eo Miolane$^*$\\ \\
$\dagger$ Ecole Polytechnique F\'ed\'erale de Lausanne, Suisse. \\
$\Diamond$ International Center for Theoretical Physics, Trieste, Italy.\\
$*$ INRIA \& Ecole Normale Sup\'erieure, Paris, France.}
\begin{document}
\maketitle
\begin{abstract}
We consider rank-one non-symmetric tensor estimation and derive simple formulas for the mutual information. We start by the order 2 problem, namely matrix factorization. We treat it 
completely in a simpler fashion than previous proofs using a new type of interpolation method developed in \cite{BarbierM17a}. We then show how to harness the 
structure in ``layers'' of tensor estimation in order to obtain a formula for the mutual information for the order 3 problem from the knowledge of the 
formula for the order 2 problem, still using the same kind of interpolation. Our proof technique straightforwardly generalizes and allows to rigorously obtain 
the mutual information at any order in a recursive way. 
\end{abstract}
\section{Introduction}\label{intro}

In the last two decades tensor estimation (also called tensor factorization or decompostion) has found many 
applications in signal processing, high dimensional stastitistics, data mining and machine learning \cite{sidiropoulos2016,cichoki2015,kolda2009}. 
In this contribution we consider simple versions of the problem within a Bayesian framework. 
One observes a {\it noisy version} of an $n$-dimensional, rank-one,
order $p$ tensor $\bU_1\otimes \bU_2 \otimes \dots \otimes \bU_p$ and the goal is to provide an estimate of the $n$-dimensional
random vectors $\bU_i$, $i\!=\!1,\dots, p$ constituting the tensor. 
We consider additive Gaussian noise and in the Bayesian formulation the variance of the noise as well as the priors on the vectors
to be estimated are supposed to be known. A central quantity is the average mutual information, or log-partition function, associated to the Bayes posterior. 
Indeed from this quantity one can typically determine phase transitions as well as  performance measures related to 
minimum mean-square-errors (MMSE). There are very precise conjectures
within this framework that come from analytical computations based on the replica method \cite{mezard1990spin} of
statistical physics and message-passing methods providing the so-called 
approximate message-passing (AMP) algorithm \cite{Donoho10112009,BayatiMontanari10}. These calculations 
have allowed to derive phase diagrams predicting 
stuctural phase transitions inherent 
to the problem, and to compare them to the algorithmic phase transitions \cite{2017arXiv170100858L}. The main finding is that  
there is a region of parameters where AMP performs (in an MMSE sense) as well as the optimal Bayesian estimator,
but there also typically exist large regions of parameters where AMP is sub-optimal or cannot even estimate better than a pure random guess. 
We point out that this phenomenology seems to be quite universal and is found in many other problems related 
to Bayesian inference \cite{mezard2009information}. 

In this contribution we provide a rigorous analysis of the mutual information for rank-one, order $p$ tensor 
estimation in the asymptotic regime $n\!\to\! \infty$. Computing the mutual information (or log-partition sum) a priori involves 
intractable $n$-dimensional integrals or sums. We reduce the problem to low dimensional (typically of order $p$) variational 
expressions which can in principle be solved on a computer. These variational problems also lead to interesting 
questions that are not fully solved, and we provide related conjectures. 

Our method of analysis is based 
on a recently developed {\it adaptive interpolation method} \cite{BarbierM17a} together 
with an inherent {\it layered structure} that underpins the tensor estimation problem: We will relate 
the solution of the order $p\!+\!1$ problem to that of the order $p$ one and provide recursive variational formulas. The case $p\!=\!2$, the so-called ``matrix factorization'' problem, forms the base case and will be presented first as a pedagogical example 
of our interpolation technique. We then explicitly show how to go from $p\!=\!2$ to $p\!=\!3$. The generalization as well as other details 
of our analysis are left to a longer forthcoming contribution. 

Let us briefly say a word on the history of interpolation techniques which are central to this work.
They first originated in the seminal works of Guerra and Toninelli \cite{Guerra-Toninelli-2002,Guerra-2003}
which paved the way towards Talagrand's  proof \cite{Talagrand-annals-2006}
of the Parisi formula \cite{Parisi-1980} for the free eenrgy of the Sherington-Kirkpatrick 
spin glass. Not only these methods have allowed to obtain many more rigorous results on mean-field spin-glasses \cite{Panchenko2013}, 
but remarkably, they have found numerous applications in coding theory, signal processing and theoretical computer 
science problems. So far, Bayesian inference has provided the most fertile ground and replica 
(symmetric) formulas for mutual informations are completely proved in many such cases. A non-exhaustive list of examples is: Coding theory \cite{Giurgiu_SCproof}, random linear estimation \cite{barbier_allerton_RLE,barbier_ieee_replicaCS} and 
matrix factorization \cite{XXT,2016arXiv161103888L,2017arXiv170200473M}. All these works combine the original Guerra-Toninelli interpolation with some {\it other} methodology such as spatial coupling 
\cite{Giurgiu_SCproof,XXT} or the Aizenman-Sims-Starr principle \cite{aizenman2003extended,2017arXiv170200473M}. The present contribution uses a refined form 
of interpolation which is {\it self-contained} and provides what we believe is a 
much simpler and unified approach. 
This approach has also been successfully  
used very recently for non-linear estimation \cite{barbier2017phase}. Finally, as
pointed out above, a special feature of the present problem is the layered structure of tensor estimation 
and we believe that this 
aspect can be leveraged to analyze other relevant multilayered problems. 
\section{Setting and results} 
\label{sec:partI}
\subsection{Non-symmetric tensor estimation}
\subsubsection{Order 2 tensor estimation}
\label{sec:settings}
We use the notation $\bX\!\iid\! P$ to express that the vector (or tensor) $\bX$ has i.i.d.\ components distributed according to $P$. The order $p\!=\!2$ rank-one tensor estimation problem, or matrix factorization, is the task of infering the vectors $\bU\!\in\!\mathbb{R}^{\alpha_u n}$ and $\bV\!\in\!\mathbb{R}^{\alpha_v n}$ (all $\alpha$'s are fixed) from the matrix $\bY\!\in\!\mathbb{R}^{\alpha_u n\times \alpha_v n}$ obtained from the following observation model
\begin{talign}
Y_{ij} = \sqrt{\frac{\lambda}{n}}\,U_iV_j + Z_{ij}, \label{uv}
\end{talign}
for $1\!\le\! i\!\le \alpha_u n$ and $1\!\le\! j\!\le \alpha_v n$. Here $\bZ\!\iid \!{\cal N}(0,1)$ is a Gaussian noise matrix. $\lambda$ is the signal-to-noise ratio and the normalization $1/\sqrt{n}$ makes the estimation problem non trivial. 
We suppose that $\bU\!\iid\! P_u$ and $\bV\!\iid\! P_v$ where the probability distributions $P_u$ and $P_v$ are known by the statistician. We assume that $P_u$ and $P_v$ have a bounded support (this boundedness hypothesis can be relaxed at the end of the proof by approximation arguments, see e.g \cite{barbier2017phase}), and we denote by $\rho_u$ and $\rho_v$ their second moments.

Let $n_u\!=\!\alpha_u n$, $n_v\!=\!\alpha_v n$. We consider a Bayesian setting and associate to the model \eqref{uv} its posterior distribution. The likelihood of the (component-wise conditionally independent) 
observation matrix $\bY$ is 
\begin{talign} \nonumber
P(\bY|\bu, \bv) \!=\! \frac{1}{(2\pi)^{\frac{n_un_v}{2}}} \exp\big\{\!-\frac{1}{2}\!\sum_{i,j=1}^{n_u,n_v}\big(Y_{ij} - \sqrt{\frac{\lambda}{n}}u_i v_j\big)^2\big\},
\end{talign} 
and from the Bayes formula we get the posterior distribution ($\boldsymbol{\Theta}$ is the set of {\it quenched} variables, in this case $\bU, \bV$ and $\bZ$)
\begin{talign}
P(\bu,  \bv|\bY ) = \frac{1}{{\cal Z}(\boldsymbol{\Theta})}P_u(\bu)P_v(\bv) e^{-{\cal H}(\bu,\bv;\boldsymbol{\Theta})} 
\label{post_}
\end{talign}
where we slightly abuse notation by writing $P_u(\bu)\!=\!\prod_{i=1}^{n_u} P_u(u_i)$ and so forth. We employ the vocabulary of statistical physics and call
\begin{talign}
{\cal H}(\bu,\bv;\boldsymbol{\Theta}) 
\equiv \lambda\sum_{i,j=1}^{n_u,n_v}\!\big(\frac{(u_iv_j)^2}{2n} \!-\! \frac{u_iv_jU_iV_j}{n}\!-\!\frac{u_iv_j Z_{ij}}{\sqrt{\lambda n}}\big)
\label{HUV}
\end{talign}
the {\it Hamiltonian} of the model (for obtaining the posterior we replaced $\bY$ by \eqref{uv} and simplified the terms independent of $\bu,\bv$ when normalizing it). The {\it partition function} 
\begin{talign*}
{\cal Z}(\boldsymbol{\Theta})\equiv\int dP_u(\bu) dP_v(\bv) e^{-{\cal H}(\bu,\bv;\boldsymbol{\Theta})}	
\end{talign*}
is the ($\lambda$-dependent) posterior normalization factor. 
%
%
%

Our principal quantity of interest is the average {\it free energy} (the upperscript stands for order $p\!=\!2$)
\begin{talign}
f_n^{(2)}(\lambda)\equiv- \frac{1}{n}\EE\ln {\cal Z}(\boldsymbol{\Theta})\label{f}	
\end{talign}
where $\EE$ always means expectation w.r.t.\ the quenched random variables appearing inside an expression.
It is equal up to an additive constant to the Shannon entropy $H(\bY)$ of $P(\bY)$. This object is related to the mutual information:
\begin{talign}
\frac{1}{n}I(\bU,\bV;\bY) = f_n^{(2)}(\lambda)+\frac{\lambda}{2} \alpha_u\alpha_v\rho_u\rho_v.\label{2MI}
\end{talign}

Its limit $\lim_{n\to\infty}f_n^{(2)}(\lambda)$ contains interesting information such as the location of {\it phase transitions} corresponding to its non analyticity points. Of particular interest is its first (as $\lambda$ is deacreased from infinity) non-analyticity point 
sometimes called the {\it information theoretic threshold} $\lambda_{\rm Opt}$: It is the lowest signal-to-noise ratio such that inference of $(\bU,\bV)$ from $\bY$ is information theoretically ``possible''. Indeed the optimal value of performance measures, such as the minimum mean-square-errors of the vectors, is typically low only above $\lambda_{\rm Opt}$. 
We refer to \cite{barbier_allerton_RLE,barbier_ieee_replicaCS,2016arXiv161103888L,2017arXiv170200473M,2017arXiv170100858L} for more motivations for computing free energies, including algorithmic aspects.
\begin{remark}[Channel universality]\label{rmk:univers}
The Gaussian noise setting \eqref{uv} is actually sufficient to completely characterize the generic model where $\bY$ is observed through a noisy element-wise
(possibly non-linear) output probabilistic channel $P_{\rm out}(Y_{ij}|u_iv_j/\sqrt{n})$. This is made possible by a theorem of 
channel universality \cite{krzakala2016mutual} (conjectured in \cite{lesieur2015mmse} and already proven for community detection in \cite{deshpande2015asymptotic}). The same remark applies to higher order tensor estimation models \cite{lesieur2015mmse,2017arXiv170100858L,2017arXiv170108010L}.
\end{remark}
\subsubsection{Order 3 tensor estimation}
We now observe the order $3$ tensor $\bF$ with entries
\begin{talign}
F_{ijk} = \frac{\sqrt{\lambda}}{n}\,U_iV_jW_k + Z_{ijk} \label{uvw}
\end{talign}
for $1\!\le\! i\!\le\! \alpha_u n$, $1\!\le\! j\!\le \!\alpha_v n$ and $1\!\le\! k\!\le\! \alpha_w n$. The normalization dividing the product of vector is $n^{(p-1)/2}$ for an order $p$ tensor problem; again this normalization makes the problem non-trivial. There is now an additional $\bW\!\in\!\mathbb{R}^{\alpha_w n}$ to infer. It has i.i.d.\ components drawn from the known prior $P_w$ with bounded support and with second moment $\rho_w$. Now $\bZ\iid{\cal N}(0,1)$ is a Gaussian noise tensor.

As for matrix estimation one can define the posterior $P(\bu,\bv,\bw|\bF)$ similarly as \eqref{post_} but with the additional $\bw$ dependence. The associated Hamiltonian ${\cal H}(\bu,\bv,\bw;\boldsymbol{\Theta})$ is equal to 
\begin{talign*}
&\lambda\sum_{i,j,k=1}^{n_u,n_v,n_w}\!\big(\frac{(u_iv_jw_k)^2}{2n^2} \!- \!\frac{u_iv_jw_kU_iV_jW_k}{n^2} \!-\!\frac{u_iv_jw_k Z_{ijk}}{\sqrt{\lambda}n}\big). 
\end{talign*}
Then the average free energy $f_n^{(3)}(\lambda)$ for this model is defined similarly as \eqref{f}, but with the partition function (the normalization of $P(\bu,\bv,\bw|\bF)$) being now ${\cal Z}(\boldsymbol{\Theta}) \!=\! \int dP_u(\bu)dP_v(\bv)dP_w(\bw)\exp(-{\cal H}(\bu,\bv,\bw;\boldsymbol{\Theta}))$. The average free energy is related to the mutual information through
\begin{talign}
\frac{1}{n}I(\bU,\bV,\bW;\bF) = f_n^{(3)}(\lambda)+\frac{\lambda}{2} \alpha_u\alpha_v\alpha_w\rho_u\rho_v\rho_w.\nonumber
\end{talign}

Note the following recursive, or ``layered'', structure linking the order $2$ and $3$ versions of tensor estimation: Conditional on $\bW$, model \eqref{uvw} is an instance of \eqref{uv}. Indeed if one is given $\bW$ and observes \eqref{uvw}
then the inference of $(\bU,\bV)$ from the knowledge of $\bF$ collapses to an order $p\!=\!2$ tensor estimation problem (this remark generalizes to higher orders). This trivial remark is actually essential and is at the root of our recursive proof construction. We will exploit it in order to show that the knowledge of a simple expression for the mutual information of order $p$ tensor estimation can be used to 
obtain one for the order $p\!+\!1$ problem. For pedagogical purpose we prove the result for $p\!=\!2$.
%
\subsection{Variational formulas for the mutual information}
An important role in our proof is played by simple {\it scalar} estimation problems under Gaussian noise. Consider the estimation of the scalar r.v. $X\!\sim\! P_u$ from the observation $Y\!=\!\sqrt{m}\,X\!+\!Z$
%
%
where $Z\!\sim\!{\cal N}(0,1)$ and $m$ plays the role of a signal-to-noise ratio. Then the average free energy for this problem is 
\begin{talign}
\widetilde f_{u}(m)\equiv-\mathbb{E}\ln\int dP_u(x) e^{-m(\frac{x^2}{2} - xX- xZ/\sqrt{m})}. \label{MFf_tensor}
\end{talign}
It is related to minus the expected logarithm of the normalization of the posterior $P(x|Y)$. Define similarly $\widetilde f_{v}$ and $\widetilde f_{w}$ as the r.h.s of \eqref{MFf_tensor} but with $x,X\!\sim\! P_v$ or $P_w$ respectively. Note that by similar computations as in sec.~\ref{sec:dfdt} we have $-2\widetilde f_{u}'(m) = \mathbb{E}\langle x X\rangle_{u,m}\in[0,\rho_u]$, where $\langle -\rangle_{u,m}$ is the expectation w.r.t. the $x$-p.d.f. $\propto dP_u(x) e^{-m(\frac{x^2}{2} - xX- xZ/\sqrt{m})}$.

Define the {\it potential} for matrix estimation as
\begin{talign}
f_{\rm pot}^{(2)}(m_u,m_v;\lambda)&\equiv \frac{\lambda}{2}\alpha_u\alpha_v m_um_v  \nonumber\\
&\ + \alpha_u \widetilde f_{u}(\lambda\alpha_v m_v)+ \alpha_v \widetilde f_{v}(\lambda \alpha_u m_u). \label{f2}
\end{talign}
In the next section we prove using the adaptive interpolation method the following result, first proven in \cite{2017arXiv170200473M} using a more technical strategy based on a rigorous version of the cavity method \cite{mezard1990spin,mezard2009information}, the so-called Aizenman-Sims-Starr scheme \cite{aizenman2003extended}.
In order to state the result we need to introduce the set of critical points of the potential \eqref{f2}:

\vspace{-7pt}
\begin{small}	
	\begin{equation*}
	\Gamma_2(\lambda) \equiv \bigg\{ (m_u,m_v) \in \R_+^2 \, \bigg| \, 
		\begin{array}{rl}
			m_u\!\!\!\! &= -2 {\widetilde{f}}'_u(\lambda \alpha_v m_v)
			\\
			m_v \!\!\!\!&= -2 \widetilde{f}'_v(\lambda \alpha_u m_u)
		\end{array}
	\bigg\}.	
	\end{equation*}	
\end{small}
These equations are known as ``replica-symmetric equations'' in spin glass theory (see \cite{mezard2009information,Talagrand2011spina} for instance) or ``state evolution equations'' in the context of approximate message-passing algorithms \cite{bayati2011dynamics,Montanari-Javanmard}.

\begin{thm}[Free energy of matrix estimation] \label{thm:RS_2}
Fix $\lambda>0$. The average free energy of model \eqref{uv} verifies
\begin{align*}
\lim_{n\to \infty}f_n^{(2)}(\lambda)
&= \inf_{\Gamma_2(\lambda)} f_{\rm pot}^{(2)}(m_u,m_v;\lambda)
\\
&= {\adjustlimits \inf_{m_u}\sup_{m_v}}\, f_{\rm pot}^{(2)}(m_u,m_v;\lambda)
\end{align*}
where this optimization is over $m_u\in[0,\rho_u]$, $m_v\in[0,\rho_v]$.
\end{thm}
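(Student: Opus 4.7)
The plan is to prove Theorem 2.1 via the adaptive interpolation method of \cite{BarbierM17a}. I would introduce a one-parameter family of models indexed by $t\in[0,1]$ in which the bilinear observations \eqref{uv} are attenuated to $Y_{ij}(t) = \sqrt{\lambda(1-t)/n}\,U_iV_j + Z_{ij}$ and supplemented by two decoupled scalar side channels $\widetilde Y_i^u = \sqrt{\lambda\alpha_v R_1(t)}\,U_i + \widetilde Z_i^u$ and $\widetilde Y_j^v = \sqrt{\lambda\alpha_u R_2(t)}\,V_j + \widetilde Z_j^v$, where $R_1(t) = \int_0^t q_v(s)\,ds$ and $R_2(t) = \int_0^t q_u(s)\,ds$ for paths $(q_u,q_v)$ taking values in $[0,\rho_u]\times[0,\rho_v]$. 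Crucially these paths are allowed to depend on $n$ and on the data. Writing $f_n^{(2)}(\lambda;t)$ for the corresponding interpolating free energy, $f_n^{(2)}(\lambda;0) = f_n^{(2)}(\lambda)$ while at $t=1$ the model factorizes and yields $f_n^{(2)}(\lambda;1) = \alpha_u\widetilde f_u(\lambda\alpha_v R_1(1)) + \alpha_v\widetilde f_v(\lambda\alpha_u R_2(1))$.

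Next I would derive a sum rule. Differentiating in $t$, I apply Gaussian integration by parts to each independent noise family and use the Nishimori identity to rewrite every term through the pair of overlaps $Q_u = \frac{1}{n_u}\sum_i u_iU_i$ and $Q_v = \frac{1}{n_v}\sum_j v_jV_j$. Elementary algebra yields
\begin{talign*}
f_n^{(2)}(\lambda) &= \alpha_u\widetilde f_u\bigl(\lambda\alpha_v R_1(1)\bigr) + \alpha_v\widetilde f_v\bigl(\lambda\alpha_u R_2(1)\bigr) \\
&\ + \tfrac{\lambda\alpha_u\alpha_v}{2}\!\int_0^1\! q_u(t)\,q_v(t)\,dt \\
&\ - \tfrac{\lambda\alpha_u\alpha_v}{2}\!\int_0^1\! \EE\la(Q_u - q_u(t))(Q_v - q_v(t))\ra_t\, dt,
\end{talign*}
valid for every admissible path, up to an $o_n(1)$ term absorbed from the noise expansions.

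For the upper bound I would take constant paths $q_u\equiv m_u$, $q_v\equiv m_v$: the first three terms collapse to $f_{\rm pot}^{(2)}(m_u,m_v;\lambda)$, while a Nishimori-based two-replica decomposition of the remainder controls its sign and delivers $\limsup_n f_n^{(2)}(\lambda) \le \inf_{m_u}\sup_{m_v}f_{\rm pot}^{(2)}(m_u,m_v;\lambda)$. For the matching lower bound, which is the essence of the adaptive scheme, I would instead take $(q_u(t),q_v(t))$ to be a measurable solution of the self-consistent equations $q_u(t) = \EE\la Q_u\ra_t$, $q_v(t) = \EE\la Q_v\ra_t$, whose existence follows from a measurable-selection argument after a slight perturbation of the posterior. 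For this choice the remainder reduces to the integrated covariance $\int_0^1\bigl(\EE\la Q_uQ_v\ra_t - \EE\la Q_u\ra_t\EE\la Q_v\ra_t\bigr)dt$, which vanishes as $n\to\infty$ by concentration of each overlap; a compactness argument then shows the limit points of the trajectory $t\mapsto(q_u(t),q_v(t))$ lie in $\Gamma_2(\lambda)$, giving $\liminf_n f_n^{(2)}(\lambda) \ge \inf_{\Gamma_2(\lambda)}f_{\rm pot}^{(2)}$. The identity between the two variational expressions in the theorem statement follows from the concavity of $m_v\mapsto f_{\rm pot}^{(2)}(m_u,m_v;\lambda)$ (inherited from $\widetilde f_v$) together with the stationarity equations defining $\Gamma_2$.

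The main technical obstacle is the overlap concentration needed for the lower bound: one must show that $\int_0^1 \EE\la(Q_u - \EE\la Q_u\ra_t)^2\ra_t\,dt$ (and likewise for $Q_v$) tends to zero along an $n$-dependent adaptive path, so that Cauchy--Schwarz forces the remainder covariance to vanish. The standard remedy is to augment the model with an infinitesimal Gaussian side-observation perturbation that triggers concentration through I-MMSE-type identities, and then to propagate the fluctuation bound through the adaptively chosen $(q_u(t),q_v(t))$; the delicate part is that the path itself depends on $n$ and on the perturbation, requiring uniform estimates together with the measurable-selection argument that produces $(q_u,q_v)$ in the first place.
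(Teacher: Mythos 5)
Your overall architecture is right and matches the paper closely: a one-parameter interpolation with attenuated bilinear observations plus two decoupled scalar side channels whose signal-to-noise is driven by $t$-dependent paths, a sum rule obtained by Gaussian integration by parts and the Nishimori identity, overlap concentration triggered by an infinitesimal $\epsilon$-perturbation, and Cauchy--Schwarz to kill the remainder. You also correctly identify that the delicate technical point is propagating concentration along an $n$-dependent adaptive path. However, both bound arguments as you sketch them have genuine gaps.

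\textbf{Upper bound.} You take \emph{both} paths constant, $q_u\equiv m_u$, $q_v\equiv m_v$, and claim a ``Nishimori-based two-replica decomposition of the remainder controls its sign.'' For the non-symmetric problem the remainder $\EE\langle (Q_u-m_u)(Q_v-m_v)\rangle_t$ is a product of two \emph{distinct} centered overlaps and does \emph{not} have a definite sign, unlike the symmetric ($\bU=\bV$) case where it is a square. There is no two-replica identity that rescues this; the absence of a sign here is precisely the obstruction that made non-symmetric matrix factorization harder than the symmetric case. The paper's upper bound is not ``constant path plus sign of remainder'': it keeps $m_u$ constant but takes $m_v(t,\epsilon)=\EE\langle Q_v\rangle_{t,\epsilon}$ \emph{adaptively}, so the remainder becomes $\EE\langle (Q_u-m_u)(Q_v-\EE\langle Q_v\rangle_{t,\epsilon})\rangle_{t,\epsilon}$ and is killed by the concentration of $Q_v$ together with Cauchy--Schwarz, irrespective of sign. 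This is the whole point of the adaptive method — it cancels the remainder rather than trying to sign it.

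\textbf{Lower bound.} You pick $q_u(t)=\EE\langle Q_u\rangle_t$, $q_v(t)=\EE\langle Q_v\rangle_t$ so that the remainder becomes an integrated covariance that vanishes — that cancellation step is fine. But you then invoke ``a compactness argument'' to place limit points of $t\mapsto (q_u,q_v)$ in $\Gamma_2(\lambda)$, and this step does not hold up: nothing in the construction or in overlap concentration forces $\EE\langle Q_u\rangle_{t,\epsilon}$ to satisfy $-2\widetilde f_u'(\lambda\alpha_v\,\EE\langle Q_v\rangle_{t,\epsilon})=\EE\langle Q_u\rangle_{t,\epsilon}$; proving that would essentially already be the content of the theorem. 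You also gloss over the fact that after cancellation the sum rule contains $\int_0^1 q_u q_v\,dt$ and $\widetilde f_{u/v}(\lambda\alpha_\cdot\int q_\cdot\,dt)$, which is not $f_{\rm pot}^{(2)}$ evaluated at a single point; a Jensen step using the concavity of $\widetilde f_{u/v}$ is needed to pass to $\int_0^1 f_{\rm pot}^{(2)}(q_u(t),q_v(t))\,dt$. The paper sidesteps the $\Gamma_2$ issue entirely by choosing a \emph{different} adaptive path: $m_v(t,\epsilon)=\EE\langle Q_v\rangle_{t,\epsilon}$ to cancel the remainder, but $m_u(t,\epsilon)=-2\widetilde f_u'(\lambda\alpha_v\,\EE\langle Q_v\rangle_{t,\epsilon})$, so that by construction $\partial_{m_v}f_{\rm pot}^{(2)}(m_u(t,\epsilon),m_v(t,\epsilon);\lambda)=0$. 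Concavity of $m_v\mapsto f_{\rm pot}^{(2)}(m_u,m_v;\lambda)$ then makes the integrand equal to $\sup_{m_v}f_{\rm pot}^{(2)}(m_u(t,\epsilon),m_v;\lambda)\ge\inf_{m_u}\sup_{m_v}f_{\rm pot}^{(2)}$, which yields the lower bound in $\inf$--$\sup$ form directly. The identification with the $\inf$ over $\Gamma_2(\lambda)$ is only done afterwards, through the convex-duality Lemma~\ref{lem:sup_inf} in the appendix, not through a compactness argument on the adaptive trajectory.

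A smaller remark: the regularity needed to run Lemma~\ref{concentration} is not just ``measurable selection.'' One needs $\epsilon\mapsto R(t,\epsilon)$ to be a $\mathcal C^1$ diffeomorphism with Jacobian determinant $\ge 1$ uniformly in $t$, which the paper obtains by posing the path as a Cauchy problem and invoking Cauchy--Lipschitz plus the Liouville formula, exploiting the non-negativity of $\partial_{R}F$. Measurable selection alone gives neither the change of variables nor the Jacobian lower bound that the concentration argument relies on.
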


Define the potential of the order $3$ problem as
	\begin{align*}
		&f_{\rm pot}^{(3)}(m_u,m_v,m_w;\lambda)\equiv  \lambda \alpha_u\alpha_v\alpha_w m_u m_v m_w \nonumber
		\\
		&\qquad+\alpha_u\widetilde f_u(\lambda \alpha_v\alpha_w m_v m_w) 
		+\alpha_v\widetilde f_v(\lambda \alpha_u\alpha_w m_u m_w) \nonumber
		\\
		&\qquad+\alpha_w\widetilde f_w(\lambda \alpha_u\alpha_v m_u m_v) 
		%
	\end{align*}
	and the corresponding set of critical points:
	
	\vspace{-10pt}
	\begin{small}
		\begin{align*}
			\Gamma_3(\lambda) \!\equiv\! \left\{ \!(m_u,m_v,m_w) \in \R_+^3  \middle| \!\!
				\begin{array}{ll}
					m_u \!\!\!\!\!\!\!&=\! -2 \widetilde{f}'_u(\lambda \alpha_v \alpha_w m_v m_w)
					\\
					m_v \!\!\!\!\!\!\!&= \!-2 \widetilde{f}'_v(\lambda \alpha_u \alpha_w m_u m_w)
					\\
					m_w \!\!\!\!\!\!\!&=\! -2 \widetilde{f}'_w(\lambda \alpha_u \alpha_v m_u m_v)
				\end{array}
			\!\!\!\!\right\}\!.
		\end{align*}
	\end{small}
%
Once Theorem \ref{thm:RS_2} proven, we will use it for obtaining
\begin{thm}[Free energy of tensor estimation] \label{thm:RS_3}
Fix $\lambda>0$. The average free energy of model \eqref{uvw} verifies
\begin{align*}
&\lim_{n\to \infty}f_n^{(3)}(\lambda)
=\inf_{\Gamma_3(\lambda)} f_{\rm pot}^{(3)}(m_u,m_v,m_w;\lambda).
\end{align*}
\end{thm}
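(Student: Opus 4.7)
The plan is to adapt the adaptive interpolation method used in Theorem~\ref{thm:RS_2} to the order-three setting, invoking Theorem~\ref{thm:RS_2} itself as a black box to absorb the $(\bU,\bV)$ matrix sub-problem that appears at one endpoint of the interpolation. This exploits the layered structure noted after \eqref{uvw}: conditional on $\bW$, model \eqref{uvw} is an instance of \eqref{uv}.

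Concretely, I introduce an interpolating family of observation models indexed by $t\in[0,1]$ together with adaptive nonnegative paths $q_u(t),q_v(t),q_w(t)$. At each $t$ the observer is given three independent channels: a \textbf{tensor channel} $F_{ijk}(t)=\sqrt{(1-t)\lambda}\,U_iV_jW_k/n+Z_{ijk}$ with reduced SNR; a \textbf{matrix channel} on $(\bU,\bV)$ with cumulated SNR $R_{uv}(t)=\lambda\alpha_w\int_0^t q_w(s)\,ds$; and independent \textbf{scalar Gaussian channels} on each $W_k$ with cumulated SNR $R_w(t)=\lambda\alpha_u\alpha_v\int_0^t q_u(s)q_v(s)\,ds$. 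At $t=0$ we recover $f_n^{(3)}(\lambda)$, while at $t=1$ the tensor channel is gone and the remaining observations fully decouple $\bW$ from $(\bU,\bV)$. Evaluating $f_n(1)$ is then an application of Theorem~\ref{thm:RS_2} to the matrix side-channel (at SNR $R_{uv}(1)$) plus a straightforward scalar calculation for the $W_k$'s, which together reconstruct the $\alpha_u\widetilde f_u$, $\alpha_v\widetilde f_v$ and $\alpha_w\widetilde f_w$ pieces of $f_{\rm pot}^{(3)}$ with arguments prescribed by the $\Gamma_2$ and $\Gamma_3$ equations.

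Gaussian integration by parts in $\bZ$ combined with the Bayes-optimal Nishimori identity yields, after integrating $df_n/dt$ from $0$ to $1$, a \emph{sum rule} of the form
\begin{align*}
f_n^{(3)}(\lambda) = f_n(1) + \tfrac{\lambda\alpha_u\alpha_v\alpha_w}{2}\!\!\int_0^1 \!\!q_u(t)q_v(t)q_w(t)\,dt + \mathcal{R}_n,
\end{align*}
where the remainder $\mathcal{R}_n$ is an integral over $t$ of $\mathbb{E}\big\langle(Q_uQ_v-q_u q_v)(Q_w-q_w)\big\rangle_t$, the $Q_\bullet$ being the standard replica overlaps. For the \emph{upper bound} $\limsup f_n^{(3)}\le\inf_{\Gamma_3}f_{\rm pot}^{(3)}$ I freeze $(q_u,q_v,q_w)\equiv(m_u,m_v,m_w)\in\Gamma_3$, use Cauchy-Schwarz to bound $\mathcal{R}_n$ in terms of the Gibbs variance of $Q_w$, and minimize the resulting bound over $\Gamma_3$. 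For the matching \emph{lower bound} I follow the adaptive strategy of \cite{BarbierM17a}: a small side perturbation of SNR $\eps$ on each of $\bU,\bV,\bW$ forces concentration of the overlaps, and $(q_u,q_v,q_w)(t)$ is defined as the solution of a coupled ODE driven by $\mathbb{E}\langle Q_u\rangle_t,\mathbb{E}\langle Q_v\rangle_t,\mathbb{E}\langle Q_w\rangle_t$; concentration then makes $\mathcal{R}_n=o(1)$ along this path, and any limit point of its endpoint lies in $\Gamma_3(\lambda)$, so the two bounds coincide.

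The main obstacle is precisely the three-vector overlap concentration: extending the Guerra-type perturbation argument of \cite{BarbierM17a} to control \emph{joint} fluctuations of $Q_u,Q_v,Q_w$ in the presence of both the triple-product tensor channel and the quadratic matrix side channel. The tensorial algebra differs substantively from the matrix case because the fluctuation identities at stake involve cubic rather than quadratic forms of the overlaps, so the ``$I$-MMSE'' bounds used in \cite{BarbierM17a} must be strengthened before they can absorb $\mathcal{R}_n$. Once this concentration, together with the well-posedness and $\Gamma_3$-endpoint property of the adaptive ODE, is in hand, the upper and lower bounds match and Theorem~\ref{thm:RS_3} follows.
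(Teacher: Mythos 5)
Your overall interpolation scheme is sound and matches the paper's: a tensor channel at reduced SNR $(1-t)\lambda$, an intermediate matrix channel for $(\bU,\bV)$, scalar Gaussian channels for $\bW$, and the endpoint $t=1$ handed off to Theorem~\ref{thm:RS_2}. This is precisely how the paper exploits the layered structure. The sum rule you write down (up to the sign of the remainder) is also essentially identical to the paper's equation \eqref{40} after integration. But there are two genuine gaps in the remainder of the argument.

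First, your upper bound strategy does not work. You propose to freeze $(q_u,q_v,q_w)\equiv(m_u,m_v,m_w)\in\Gamma_3$ and bound $\mathcal{R}_n$ by Cauchy--Schwarz and overlap concentration. However, the overlap concentration lemma (Lemma~\ref{concentration} and its extension to $p=3$) shows that $Q_u$, $Q_v$, $Q_w$ concentrate around their Gibbs--averaged means $\mathbb{E}\langle Q_\bullet\rangle_{t,\epsilon}$, not around whatever fixed constants you froze. After Cauchy--Schwarz the remainder reduces to an integral of $\bigl(\mathbb{E}\langle Q_u\rangle_{t,\epsilon}\mathbb{E}\langle Q_v\rangle_{t,\epsilon}-m_u m_v\bigr)\bigl(\mathbb{E}\langle Q_w\rangle_{t,\epsilon}-m_w\bigr)$, which has no sign and does not vanish if all paths are constant. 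This is exactly the difficulty the adaptive interpolation method is designed to circumvent: the remainder is \emph{cancelled} by choosing at least one path adaptively. The paper's upper bound freezes only $m_w$ constant and sets $m_{uv}(t,\epsilon)=\mathbb{E}\langle Q_u\rangle_{t,\epsilon}\mathbb{E}\langle Q_v\rangle_{t,\epsilon}$, which makes the corresponding factor in the remainder identically zero; this gives a $\inf_{m_w}\sup_{m_{uv}}$ bound, not directly an $\inf_{\Gamma_3}$ bound.

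Second, you conflate the endpoint value with the pieces of $f_{\rm pot}^{(3)}$. At $t=1$ the free energy is $f_n^{(2)}(\lambda\alpha_w R_w(1)) + \alpha_w \widetilde f_w(\lambda\alpha_u\alpha_v R_{uv}(1))$. Applying Theorem~\ref{thm:RS_2} to the first term yields an $\inf_{m_u}\sup_{m_v}$ of $f_{\rm pot}^{(2)}$, \emph{not} the individual terms $\alpha_u\widetilde f_u+\alpha_v\widetilde f_v$ evaluated at $\Gamma_3$-prescribed arguments. Combining this nested variational expression with the outer $m_w$, $m_{uv}$ optimization and proving that the result equals $\inf_{\Gamma_3} f_{\rm pot}^{(3)}$ is a nontrivial algebraic step; the paper first proves Proposition~\ref{inf_sup_inf_sup} (an $\inf_{m_w}\sup_{m_{uv}}$ formula with the auxiliary potential $f_{\rm aux}^{(3)}$) and then invokes Lemma~\ref{lem:SE_equiv} in the appendix, which handles precisely this sup--inf collapse via Fenchel--Legendre duality and an envelope-theorem argument. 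You would need an analogue of that lemma; your proposal omits it. A smaller point: parametrizing the combined $uv$-path as the product $q_u(t)q_v(t)$ is an unnecessary over-parametrization; the proof is cleaner with a single scalar path $m_{uv}(t)$, which is what actually appears in the adaptive ODE. Finally, you over-worry about the overlap concentration---the paper's Lemma~\ref{concentration} extends to $p=3$ directly, since only individual concentration of $Q_u$, $Q_v$, $Q_w$ is needed (the product $Q_uQ_v$ is handled by Cauchy--Schwarz), and no new ``cubic'' I-MMSE machinery is required.
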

\vspace{5pt}
\begin{remark}The fact that the sets $\Gamma_2(\lambda)$ and $\Gamma_3(\lambda)$ are not empty follows from the fact that the functions $-2\widetilde{f}'_u,-2\widetilde{f}'_v,-2\widetilde{f}'_w$ are continuous, bounded and non-negatives (see Lemma 39 in \cite{2017arXiv170200473M}) and from an application of Brouwer's theorem.
\end{remark}

\section{Proofs}
\label{sec:partIII}
The main ingredient of our proof is the adaptive interpolation method recently introduced by two of us in \cite{BarbierM17a}. Note that in contrast with the discrete version of the method 
presented in \cite{BarbierM17a} we will here use it in a continuous form which is even more straightforward for the present problem (yet equivalent). The main difference with the canonical 
interpolation method developed by Guerra and Toninelli in the context of spin glasses \cite{guerra2002thermodynamic,Guerra-2003} is the following: The interpolating 
estimation model that we introduce next is parametrized by
``trial interpolating {\it functions}'' instead of a single trial parameter. These will allow for much more flexibility when choosing the interpolation path, and will actually 
permit us to select an ``optimal'' interpolation path.
\subsection{Initializing the recursion: Proof of Theorem \ref{thm:RS_2}}
\subsubsection{The interpolating model}
%
Let $\epsilon=(\epsilon_u,\epsilon_v)\in[s_n,2s_n]^2$ where $s_n\in(0,1/2]$ is a sequence that goes to $0_+$. Let also $m_{u/v}(s)\in[0,\rho_{u/v}]$ (that can depend on $\epsilon$; also the notation $m_{u/v}$ means $m_{u}$ or $m_{v}$ and similarly for the other quantities). Let the interpolation parameter (or ``time'') $t\!\in\![0,1]$ and the interpolating functions $R_{u/v}(t,\epsilon)\equiv\epsilon_{u/v}+\int_0^t m_{u/v}(s)ds$. Consider the joint estimation of $(\bU,\bV)$ from the three following types of ``time-dependent'' observations
\begin{align}\label{int_model}
\begin{cases}
Y^{(t)}_{ij} &= \sqrt{\frac{\lambda}{n}(1-t)}\,U_iV_j + Z_{ij},\\
Y^{(u,t)}_{i} &= \sqrt{\lambda\alpha_v R_v(t,\epsilon) }\,U_i + Z^{(u)}_i,\\
Y^{(v,t)}_{j} &= \sqrt{\lambda \alpha_u R_u(t,\epsilon)}\,V_j + Z^{(v)}_j,
\end{cases}
\end{align}
for $1\!\le \!i\! \le\! n_u$ and $1\!\le\! j \!\le\! n_v$. Again $\bU\!\iid \!P_u$, $\bV\!\iid\! P_v$ and $\bZ$, $\bZ^{(u)}$, $\bZ^{(v)}\!\iid\!{\cal N}(0,1)$. This model interpolates between the matrix factorization model at $t\!=\!0$ (when there is no ``perturbation'', i.e. $\epsilon=(0,0)$) to a model composed of two independent scalar Gaussian channels (one for $\bU$, one for $\bV$) at $t\!=\!1$. The $\lambda(1\!-\!t)$ appearing in the first set of observations in \eqref{int_model} as well as the interpolating functions all play the role of signal-to-noise ratios, with $t$ giving more and more ``power'' (or weight) to the scalar inference channels when increasing. Here is a crucial and novel ingredient of our interpolation scheme. In classical interpolations, these signal-to-noise ratios (snr) would all take a trivial form (i.e.\ would be linear in $t$) but here, the additional degree of freedom gained from the non-trivial dependency in $t$ of the two latter snr through the introduction of the interpolating functions will be essential.

Let us define the following {\it interpolating Hamiltonian} ${\cal H}_{t,\epsilon}\!=\!{\cal H}_{t,\epsilon}(\bu,\bv)$ (from now on we do not indicate anymore the dependence w.r.t. quenched variables to ease the notations) associated with \eqref{int_model}:

\vspace{-7pt}
\begin{talign}
&{\cal H}_{t,\epsilon}\equiv \lambda \alpha_v R_v(t,\epsilon)  \sum_{i=1}^{n_u}\!\big(\frac{u_i^2}{2} - u_iU_i-\frac{u_i Z_{i}^{(u)}}{\sqrt{\lambda\alpha_v R_v(t,\epsilon)}}\big) \nonumber \\
&+\lambda \alpha_u R_u(t,\epsilon)  \sum_{j=1}^{n_v}\!\big(\frac{v_j^2}{2} - v_jV_j-\frac{v_j Z_{j}^{(v)}}{\sqrt{\lambda\alpha_u R_u(t,\epsilon)}}\big) \nonumber\\
&+\lambda(1\!-\!t)\sum_{i,j=1}^{n_u,n_v}\!\big(\frac{(u_iv_j)^2}{2n} - \frac{u_iv_jU_iV_j}{n}-\frac{u_iv_j Z_{ij}}{\sqrt{\lambda(1-t) n}}\big).\label{Ht}
\end{talign}
%
%
We note that for $t\!=\!0$ and $\epsilon=(0,0)$ so that both $R_{u/v}$ cancel, this Hamiltonian is \eqref{HUV}. 
This Hamiltonian relates to the $t$-dependent posterior of the interpolating model through
\begin{talign}
P_{t,\epsilon}(\bu,\bv)=\frac{1}{{\cal Z}_{t,\epsilon}}P_u(\bu)P_v(\bv)e^{-{\cal H}_{t,\epsilon}} \label{post_t}
\end{talign}
where ${\cal Z}_{t,\epsilon}$ is the normalization. To \eqref{post_t} is associated a {\it Gibbs bracket} $\langle - \rangle_{t,\epsilon}$ defined as
$\langle A\rangle_{t,\epsilon} = \int  dP_{t,\epsilon}(\bu,\bv) A(\bu,\bv)$.
%
Moreover the interpolating free energy is
\begin{talign*}
f_{n,\epsilon}(t)\equiv-\frac1n\EE\ln {\cal Z}_{t,\epsilon} =-\frac1n\EE\ln\int dP_u(\bu)dP_v(\bv) e^{-{\cal H}_{t,\epsilon}}
\end{talign*}
where here $\mathbb{E}$ is the expectation w.r.t. $\bU, \bV,\bZ,\bZ^{(u)}$ and $\bZ^{(v)}$.

\subsubsection{Overlap concentration} \label{overlap_conc}
Let us define the {\it overlaps} 
\begin{talign*}
Q_u \equiv \frac{1}{n_u}\sum_{i=1}^{n_u} u_i U_i \quad \text{and}\quad Q_v \equiv \frac{1}{n_v}\sum_{j=1}^{n_v} v_j V_j	
\end{talign*}
where $(\bu,\bv)$ are jointly drawn from the posterior \eqref{post_t}. The next lemma plays a key role in our proof. Essentially it states that the overlaps concentrate around their mean, a behavior called ``replica symmetric'' in statistical physics.
Similar results have been obtained in the context of the analysis of spin glasses \cite{Talagrand2011spina}. Here we use a formulation taylored to Bayesian 
inference problems as developed in the context of LDPC codes, linear estimation and Nishimori symmetric spin glasses \cite{GKSmacris2007,KoradaMacris_CDMA,korada2009exact}. 

In order to prove this concentration the precense of the ``small'' perturbation $\epsilon=(\epsilon_u,\epsilon_v)$ is crucial. It can be interpreted as having extra observations coming from Gaussian side-channels $\widehat Y_i^{(u)}\! =\! \sqrt{\epsilon_u}\, U_i \!+\! \widehat{Z}_i^{(u)}$ and similarly for $\bV$. This perturbation induces only a small change in the free energy, namely of the order of the $\epsilon$'s. Indeed, a simple computation 
gives that at $t=0$,
$
|\partial_{\epsilon_{u/v}} f_{n,\epsilon}(t=0)| =  \frac12| \EE \langle Q_{u/v} \rangle_{0,\epsilon}|.
$
The overlaps are bounded for priors $P_u$, $P_v$ of bounded support. This implies that for priors with bounded supports, we have for all $\epsilon\in[s_n,2s_n]^2$ and $t=0$ that
\begin{align}
|f_{n,\epsilon}(0) - f_{n,\epsilon=(0,0)}(0) | \leq C\,s_n \to 0_+	\label{25}
\end{align}
for some constant $C$ depending only on the priors $P_{u}$ and $P_v$ (and recall that $f_{n,\epsilon=(0,0)}(0)=f_n^{(2)}(\lambda)$)
This small perturbation forces the overlaps to concentrate.
\begin{lemma}[Overlap concentration] \label{concentration}
Assume that for any $t \in (0,1)$ the map $\epsilon=(\epsilon_u,\epsilon_v) \in [s_n,2s_n]^2 \mapsto R(t,\epsilon)=(R_u(t,\epsilon),R_v(t,\epsilon))$ is a $\mathcal{C}^1$ diffeomorphism with Jacobian determinant greater or equal to $1$. Then one can find a sequence $s_n$ going to $0$ slowly enough such that there exist positive constants $C$ and $\gamma$ that only depend on the support and moments of $P_u$ and $P_v$ and on $\alpha$, and such that:
 \begin{align*}
\textstyle \frac{1}{s_n^2}\int_{[s_n,2s_n]^2} d\epsilon\int_0^1 dt\, \mathbb{E}\big\langle \big(Q_u - \mathbb{E}\langle Q_u\rangle_{t, \epsilon}\big)^2 \big\rangle_{t, \epsilon}  \leq C n^{-\gamma}
\end{align*}
and similarly for $Q_v$.
\end{lemma}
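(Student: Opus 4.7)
The plan is the standard two-step decomposition of the integrated overlap fluctuation into a \emph{thermal} part and a \emph{quenched} (disorder-induced) part: for each $(t,\epsilon)$,
\begin{talign*}
&\mathbb{E}\big\langle(Q_u - \mathbb{E}\langle Q_u\rangle_{t,\epsilon})^2\big\rangle_{t,\epsilon}\\
&= \mathbb{E}\big\langle(Q_u - \langle Q_u\rangle_{t,\epsilon})^2\big\rangle_{t,\epsilon} + \mathbb{E}\big[(\langle Q_u\rangle_{t,\epsilon} - \mathbb{E}\langle Q_u\rangle_{t,\epsilon})^2\big].
\end{talign*}
The two terms will be controlled separately, with the perturbation $\epsilon$ playing complementary roles in each.

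For the thermal term, a Gaussian integration by parts on $\bZ^{(u)}$ in the first block of ${\cal H}_{t,\epsilon}$, combined with the Nishimori identity $\mathbb{E}\langle u_iU_i\rangle_{t,\epsilon}=\mathbb{E}\langle u_i\rangle_{t,\epsilon}^2$, first yields the baseline identity $\partial_{R_v}f_{n,\epsilon}(t) = -\tfrac{\lambda\alpha_u\alpha_v}{2}\mathbb{E}\langle Q_u\rangle_{t,\epsilon}$. Differentiating once more and re-applying Nishimori bounds the thermal variance of $Q_u$ by a constant times $-\partial_{R_v}\mathbb{E}\langle Q_u\rangle_{t,\epsilon}/n$. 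The change of variables $\epsilon \mapsto R(t,\epsilon)$, whose Jacobian is $\geq 1$, gives $d\epsilon\leq dR$ pointwise, so the $\int d\epsilon$ of this derivative collapses in $R_v$ to a telescoping total variation bounded by the diameter $\rho_u$ of $[0,\rho_u]\ni \langle Q_u\rangle$. After dividing by $s_n^2$, this gives a thermal contribution of order $1/(ns_n)$.

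For the quenched term, one first establishes $\mathrm{Var}(f_{n,\epsilon}(t))\leq C/n$ by combining the Gaussian Poincaré inequality applied to $\bZ,\bZ^{(u)},\bZ^{(v)}$ (all entering ${\cal H}_{t,\epsilon}$ through bounded-coefficient linear terms) with an Efron--Stein / bounded-difference argument for the bounded-support priors $\bU,\bV$. Since $\langle Q_u\rangle_{t,\epsilon}$ is, up to a multiplicative constant, the derivative $\partial_{R_v}f_{n,\epsilon}(t)$ of the free energy (whose convexity in $R_v$ is visible from ${\cal H}_{t,\epsilon}$ after Gaussian averaging on $\bZ^{(u)}$), a standard convexity lemma of the type used in the Guerra--Toninelli / Panchenko framework transfers $L^2$ concentration of $f_{n,\epsilon}$ at two nearby values of $R_v$ into $L^1$ concentration of its derivative there. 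This produces an $\epsilon$-averaged bound on $|\langle Q_u\rangle_{t,\epsilon}-\mathbb{E}\langle Q_u\rangle_{t,\epsilon}|$ decaying as a positive power of $n$, squared to give the quenched contribution.

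Putting the two estimates together and choosing $s_n=n^{-\beta}$ with $\beta>0$ small enough so that $1/(ns_n)$, $s_n$, and the convexity remainder all decay polynomially in $n$ yields the stated bound $Cn^{-\gamma}$. The main obstacle is the quenched step: upgrading free-energy concentration to overlap concentration requires the convexity lemma, and this is precisely where the hypothesis ``Jacobian $\geq 1$'' is decisive, since it guarantees that an $s_n$-wide perturbation of $\epsilon$ maps to an at-least-$s_n$-wide perturbation of $R$; without this, the convexity-based transfer from free energy to its derivative would lose a polynomial factor and potentially destroy the $n^{-\gamma}$ rate.
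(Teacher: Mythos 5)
Your proposal is correct and follows essentially the same approach as the paper, which itself does not give an in-text proof of Lemma~\ref{concentration} but defers to \cite{BarbierM17a,barbier2017phase}: the decomposition into a thermal part (controlled by Gaussian integration by parts, the Nishimori identity, and the telescoping integral over $\epsilon$ obtained via the Jacobian $\geq 1$ change of variables) and a quenched part (controlled by free-energy concentration transferred to the derivative $\partial_{R_v}f_{n,\epsilon}\propto\mathbb{E}\langle Q_u\rangle$ through a convexity-type lemma) is exactly the cited method. Two small slips worth noting: $f_{n,\epsilon}$ is \emph{concave} (not convex) in $R_v$, since $\partial_{R_v}^2 f_{n,\epsilon}=-\tfrac{\lambda\alpha_u\alpha_v}{2}\partial_{R_v}\mathbb{E}\langle Q_u\rangle_{t,\epsilon}\leq 0$, though the transfer lemma works verbatim under concavity; and the references actually control the fluctuations of the auxiliary quantity $\mathcal{L}\propto\partial_{R_v}\mathcal{H}_{t,\epsilon}/n$ and then relate it to $Q_u$, a small intermediate step you have absorbed without harm.
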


We refer to \cite{BarbierM17a,barbier2017phase} for more details where the method used to show the overlap concentration has been streamlined. Note that the method is based on the concentration of the free energy around its average w.r.t. the quenched variables. In \cite{BarbierM17a} this concentration is proven for the problem of symmetric rank-one matrix factorization but the proof straightforwardly generalizes to non-symmetric tensors.
\subsubsection{Adaptive interpolation} \label{stoInt_uv}
We now have all the necessary ingredients to prove Theorem \ref{thm:RS_2} using the adaptive interpolation method. The first step is to notice, using in particular identity \eqref{25}, that $f_{n,\epsilon}(t)$ verifies
\begin{align}
\begin{cases}\label{23}
f_{n,\epsilon}(0)&= f_n^{(2)}(\lambda) + {\cal O}(s_n),\\
f_{n,\epsilon}(1)&=\alpha_u \widetilde f_{u}(\lambda\alpha_v R_v(1))+ \alpha_v \widetilde f_{v}(\lambda \alpha_u R_{u}(1)).
\end{cases}	
\end{align}
So at $t\!=\!0$ one (almost) recovers 
the average free energy \eqref{f} of the original model, while at $t\!=\!1$ appear two terms of the potential \eqref{f2}. This is the reason for the introduction of the scalar channels in \eqref{int_model}. In order to compare $f_n^{(2)}(\lambda)$ with the 
potential we use the fundamental theorem of calculus 
\begin{talign*}
f_{n,\epsilon}(0)=f_{n,\epsilon}(1)-\int_0^1 f_{n,\epsilon}'(t) dt.	
\end{talign*}
It is thus natural to compute (see sec.~\ref{sec:dfdt} for the proof)
\begin{talign}
	f_{n,\epsilon}'(t) =&-\frac{\lambda }{2}\alpha_u\alpha_v\big\{m_u(t)m_v(t) \nonumber\\
&-\EE\big\langle (Q_u - m_u(t))(Q_v-m_v(t))\big\rangle_{t,\epsilon}\big\}. \label{dfdt}
\end{talign}
Replacing \eqref{23} and \eqref{dfdt} in the fundamental theorem of calculus and using Lemma~\ref{concentration} for $Q_u$, $Q_v$ together with the Cauchy-Schwarz inequality leads that $f_n^{(2)}(\lambda)$ is equal to
\begin{talign}
&\frac{1}{s_n^2}\!\int d\epsilon\big[\alpha_u \widetilde f_{u}(\lambda\alpha_v \int_0^1 m_v(t)dt)\!+\! \alpha_v \widetilde f_{v}(\lambda \alpha_u  \int_0^1 m_u(t)dt)\nonumber\\
& +\frac{\lambda}{2} \alpha_u\alpha_v\big\{{\textstyle \int_0^1 dt\, m_u(t)m_v(t)}\label{14}\\
&-{\textstyle \int_0^1 dt(\EE\langle Q_u\rangle_{t,\epsilon} \!-\! m_u(t))(\EE\langle Q_v\rangle_{t,\epsilon}\!-\!m_v(t))\big\}}\big]+\smallO_n(1),\nonumber
\end{talign}
where $\smallO_n(1)$ denotes a quantity that goes to $0$ as $n\to \infty$, uniformly in $t, m_{u},m_v,\epsilon$. To obtain this last identity we also used the continuity and boundedness of $\widetilde f_{u}$ and $\widetilde f_{v}$ (see e.g. \cite{barbier2017phase} or sec. 7.1 in \cite{2017arXiv170200473M}). We are now in position to provide the core identity of our proof scheme:
\begin{lemma}[Sum rule]\label{sumRule} 
Assume $\epsilon\mapsto R(t,\epsilon)$ satisfies the hypotheses of Lemma~\ref{concentration}, and choose $s_n\to 0_+$ according to this lemma. Assume that for all $t \in [0,1]$ and $\epsilon \in [s_n,2s_n]^2$ we have $m_v(t,\epsilon) = \mathbb{E}\langle Q_v \rangle_{t,\epsilon}$. Then:
\begin{talign*}
&f_n^{(2)}(\lambda)= \smallO_n(1)+\frac{1}{s_n^2}\!\int d\epsilon \big[\frac{\lambda}{2} \alpha_u\alpha_v \int_0^1 dt\, m_u(t,\epsilon)m_v(t,\epsilon)\\
&\quad+\alpha_u \widetilde f_{u}(\lambda\alpha_v \int_0^1 m_v(t,\epsilon)dt)+ \alpha_v \widetilde f_{v}(\lambda \alpha_u  \int_0^1 m_u(t,\epsilon)dt)\big]
\end{talign*}	
where $\smallO_n(1)$ is uniform in $t, m_{u},m_v,\epsilon$.
\end{lemma}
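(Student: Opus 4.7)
The plan is to exploit the fact that equation~\eqref{14} already carries out almost all of the analytic work, so that the Sum Rule becomes a one-line specialization of it. Recall that~\eqref{14} was obtained by combining the fundamental theorem of calculus applied to $t\mapsto f_{n,\epsilon}(t)$, the boundary values in~\eqref{23}, the derivative formula~\eqref{dfdt}, Lemma~\ref{concentration}, and a single Cauchy--Schwarz step; the last two of these jointly absorb the true overlap covariance $\mathbb{E}\langle(Q_u-\mathbb{E}\langle Q_u\rangle_{t,\epsilon})(Q_v-\mathbb{E}\langle Q_v\rangle_{t,\epsilon})\rangle_{t,\epsilon}$ into the $\smallO_n(1)$ error. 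The hypothesis on $R(t,\epsilon)$ imposed by the Sum Rule is exactly what is needed to invoke Lemma~\ref{concentration}, so~\eqref{14} holds verbatim under the present assumptions.

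What prevents~\eqref{14} from being the Sum Rule on the nose is the residual cross-term
\begin{align*}
-\frac{\lambda\alpha_u\alpha_v}{2s_n^2}\int d\epsilon\int_0^1 dt\,\bigl(\mathbb{E}\langle Q_u\rangle_{t,\epsilon}-m_u(t,\epsilon)\bigr)\bigl(\mathbb{E}\langle Q_v\rangle_{t,\epsilon}-m_v(t,\epsilon)\bigr),
\end{align*}
which survives after Cauchy--Schwarz has dispatched the true covariance. The central observation is that with the \emph{adaptive} choice $m_v(t,\epsilon)=\mathbb{E}\langle Q_v\rangle_{t,\epsilon}$ postulated in the statement, the second factor of this integrand vanishes pointwise in $(t,\epsilon)$, so the entire residual drops identically (not merely asymptotically). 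I would therefore substitute this equality into~\eqref{14} and simply collect the surviving terms, producing the announced identity. Uniformity of the $\smallO_n(1)$ in $t, m_u, m_v, \epsilon$ is automatic since both Lemma~\ref{concentration} and the uniform continuity/boundedness of $\widetilde f_u,\widetilde f_v$ supply estimates that do not depend on these quantities.

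No genuine obstacle arises in the Sum Rule itself: all the hard analysis sits upstream, in the overlap concentration of Lemma~\ref{concentration} and in the derivative computation~\eqref{dfdt}. The one subtlety worth flagging — but not to be resolved here — is that the assumption $m_v(t,\epsilon)=\mathbb{E}\langle Q_v\rangle_{t,\epsilon}$ is implicit, because the Gibbs bracket $\langle\cdot\rangle_{t,\epsilon}$ depends itself on $R_v(t,\epsilon)=\epsilon_v+\int_0^t m_v(s,\epsilon)\,ds$, hence on $m_v$. The Sum Rule side-steps this by merely postulating the existence of such an $m_v$ obeying both the fixed-point condition and the Jacobian hypothesis of Lemma~\ref{concentration}; the actual construction of such a path, typically via a Cauchy--Lipschitz argument for the ODE $\partial_t R_v(t,\epsilon)=\mathbb{E}\langle Q_v\rangle_{t,\epsilon}$ with $m_u$ held as a free trial parameter, is deferred to the point where the Sum Rule is invoked in the proof of Theorem~\ref{thm:RS_2}.
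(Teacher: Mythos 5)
Your argument is correct and is exactly the paper's: the sum rule is an immediate specialization of \eqref{14}, whose residual cross-term $-\frac{\lambda\alpha_u\alpha_v}{2s_n^2}\int d\epsilon\int_0^1 dt\,(\mathbb{E}\langle Q_u\rangle_{t,\epsilon}-m_u)(\mathbb{E}\langle Q_v\rangle_{t,\epsilon}-m_v)$ vanishes identically under the hypothesis $m_v(t,\epsilon)=\mathbb{E}\langle Q_v\rangle_{t,\epsilon}$. The subtlety you flag about the implicitness of that hypothesis is well taken and is indeed deferred in the paper to the Cauchy--Lipschitz constructions in the upper/lower bound arguments.
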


From this we can derive in a unified way matching bounds. But first we emphasize on a crucial and novel point of the adaptive interpolation method which makes it quite powerful: In previous interpolations, the remainder (i.e. the last term in \eqref{14}) always remains and if by luck it has an obvious sign, then comparisons between the left and right hand sides of identities like \eqref{14} may eventually lead to a (single-sided) bound. But with our method the remainder can be directly {\it canceled}, which allows to obtain much stronger results irrespective of the remainder's sign as we show now.
\subsubsection{Matching bounds} \label{sec:matrix_bounds}
Similar bounds can be found in \cite{barbier2017phase}, to which we will refer when needed for more details. 

$\bullet$ {\it Upper bound:} Let $m_u(t)=m_u\in[0,\rho_u]$ be a constant. We then fix $R = (R_u,R_v)$ as the solution $R(t,\epsilon)=(\epsilon_u+m_u t,\epsilon_v+\int_0^t m_v(s,\epsilon)ds)$ to the first order differential equation: $\partial_t R_u(t) = F_u$, $\partial_t R_v(t) = F_v(t,R(t))$, and $R(0) = \epsilon$, with $F_u\equiv m_u$, $F_v(t,R(t))\equiv \mathbb{E}\langle Q_v \rangle_{t,\epsilon}$ which takes values in $[0,\rho_v]$. One can check (see \cite{barbier2017phase}) that this ODE satisfies the hypotheses of the Cauchy-Lipschitz theorem. As $F=(F_u,F_v)$ (which also depends on $n$) is continuous and admits continuous partial derivatives, $R(t,\epsilon)$ is ${\cal C}^1$ (in both arguments). By the Liouville formula, the Jacobian determinant $J_{n,\epsilon}(t)$ of $\epsilon \mapsto R(t,
\epsilon)$ satisfies $J_{n,\epsilon}(t) = \exp \{\int_0^t\partial_{R_v} F_v(s,R(s,\epsilon))ds\} \geq 1$; indeed, $\partial_{R_v} F_v\ge 0$, see Prop.~6 of \cite{barbier2017phase}. Also, as this Jacobian never cancels, and as $\epsilon \mapsto R(t,
\epsilon)$ is injective (by unicity of $R(t,
\epsilon)$), it is a diffeomorphism by the inversion theorem. Recall \eqref{f2}. Then Lemma.~\ref{sumRule} implies: 
\begin{align*}
f_n^{(2)}(\lambda) \!=\! {\textstyle \frac{1}{s_n^{2}}\!  \int_{[s_n,2s_n]^2} d\epsilon \, f_{\rm pot}^{(2)}(m_u,\int_0^1 m_v(t,\epsilon)dt;\lambda)\!+\! \smallO_n(1)	}.
\end{align*}
Thus $\limsup_{n\to\infty}f_n^{(2)}(\lambda)\le { \inf_{m_u}\sup_{m_v}}f_{\rm pot}^{(2)}(m_u,m_v;\lambda)$
where the optimization is over $m_u\in[0,\rho_u]$, $m_v\in[0,\rho_v]$.

$\bullet$ {\it Lower bound:} Fix $R$ as the solution $R(t,\epsilon)=(\epsilon_u+\int_0^t m_u(s,\epsilon)ds,\epsilon_v+\int_0^t m_v(s,\epsilon)ds)$ to the following Cauchy problem: $\partial_t R_u(t) = F_u(t,R(t))\equiv -2\widetilde f_u'(\lambda\alpha_v\mathbb{E}\langle Q_v\rangle_{t,\epsilon})$ (recall \eqref{MFf_tensor}) and $\partial_t R_v(t) = F_v(t,R(t)) \equiv \mathbb{E}\langle Q_v\rangle_{t,\epsilon}$ with $R(0) = \epsilon$. We denote this equation $\partial_t R(t) = F(t,R(t))$. The solutions verify $m_u(t,\epsilon) \in [0,\rho_u]$ and $m_v(t,\epsilon)=\mathbb{E}\langle Q_v\rangle_{t,\epsilon} \in [0,\rho_v]$. It is possible to verify (see the details in a similar case in \cite{barbier2017phase}) that $F(t,R)$ is a bounded $\mathcal{C}^1$ function of $R$, and thus the Cauchy-Lipschitz theorem implies that $R(t,\epsilon)$ is a $\mathcal{C}^1$ function of both $t$ and $\epsilon$. The Liouville formula for the Jacobian determinant $J_{n,\epsilon}(t)$ of the map $\epsilon \mapsto R(t,\epsilon)$ yields $J_{n,\epsilon}(t) = \exp\{\int_0^t\partial_{R_u} F_u(s,R(s,\epsilon))ds+\int_0^t\partial_{R_v} F_v(s,R(s,\epsilon))ds\} \geq 1$ as both partial derivatives (in the exponential) are non-negative for all $s \in (0,1)$ (see again \cite{barbier2017phase}). By the same arguments as in the previous bound, for any $t$, the map $\epsilon \mapsto R(t,\epsilon)$ a ${\cal C}^1$ diffeomorphism. All hypotheses of Lemma.~\ref{sumRule} are verified. It leads to:
\begin{talign*}
&f_n^{(2)}(\lambda)= \smallO_n(1)+\frac{1}{s_n^2}\!\int d\epsilon\big[\frac{\lambda}{2} \alpha_u\alpha_v \int_0^1 dt\, m_u(t,\epsilon)m_v(t,\epsilon)\nonumber\\
&+\alpha_u \widetilde f_{u}(\lambda\alpha_v \int_0^1 m_v(t,\epsilon)dt)+ \alpha_v \widetilde f_{v}(\lambda \alpha_u  \int_0^1 m_u(t,\epsilon)dt)\big].
\end{talign*}
Both $\widetilde f_{u}$ and $\widetilde f_{v}$ are concave (it is simple to show, see e.g. \cite{barbier2017phase}). Jensen's inequality thus yields (and recalling \eqref{f2})
\begin{eqnarray*}
\textstyle f_n^{(2)}(\lambda)\geq \frac{1}{s_n^{2}} \int d\epsilon \int_0^1  dt \, f_{\rm pot}^{(2)}(m_u(t,\epsilon),m_v(t,\epsilon);\lambda) + \smallO_n(1),
\end{eqnarray*}
($\epsilon$ is integrated over $[s_n,2s_n]^2$). Notice now that 
\begin{align*}
f_{\rm pot}^{(2)}(m_u(t,\epsilon),m_v(t,\epsilon);\lambda)=\sup_{m_v\in[0,\rho_v]}f_{\rm pot}^{(2)}(m_u(t,\epsilon),m_v;\lambda).	
\end{align*}
Indeed, $g_{m_u} : m_v \mapsto f_{\rm pot}^{(2)}(m_u,m_v;\lambda)$ is concave (by concavity of $\widetilde f_{v}$), with derivative $g_{m_u}'(m_v) =\frac{\lambda}{2}\alpha_u\alpha_v[m_u + 2\widetilde f_u'(\lambda \alpha_v m_v)]$. By definition of the solution $R(t,\epsilon)$ of the ODE, $g_{m_u(t,\epsilon)}'(m_v(t,\epsilon)) = 0$ for any $(t,\epsilon)$, so by concavity $g_{m_u(t,\epsilon)}$ reaches its maximum at $m_v(t,\epsilon)$. Therefore,
\begin{talign*}
f_n^{(2)}\!(\lambda)&\geq \frac{1}{s_n^{2}} \int d\epsilon \int_0^1  dt  \sup_{m_v} f_{\rm pot}^{(2)}(m_u(t,\epsilon),m_v;\lambda) \!+\! \smallO_n(1)\\
&\geq \inf_{m_u}  \sup_{m_v} f_{\rm pot}^{(2)}(m_u,m_v;\lambda) \!+\! \smallO_n(1).
\end{talign*}
Thus $\liminf_{n\to\infty}f_n^{(2)}(\lambda)\ge { \inf_{m_u}\sup_{m_v}}f_{\rm pot}^{(2)}(m_u,m_v;\lambda)$, which ends the proof of the second equality of Theorem \ref{thm:RS_2}. The first equality follows from Lemma \ref{lem:sup_inf} in appendix. $\QEDA$
\subsubsection{Proof of \eqref{dfdt}}\label{sec:dfdt}
Let us show how the derivative of the interpolating free energy is obtained. Is is given by
\begin{talign*}
	f_{n,\epsilon}'&(t)=\frac{1}{n}\EE\big\langle \frac{d{\cal H}_t}{dt}\big \rangle_t \\
&=\frac{\lambda \alpha_vm_v(t)}{n}\EE\big\langle\sum_{i=1}^{n_u}\big(\frac{u_i^2}{2}-u_iU_i-\frac{u_iZ_i^{(u)}}{2\sqrt{\lambda \alpha_v R_v(t)}}\big)\big \rangle_t\nonumber\\
&\quad+\frac{\lambda \alpha_um_u(t)}{n}\EE\big\langle\sum_{j=1}^{n_v}\big(\frac{v_j^2}{2}-v_jV_j-\frac{v_jZ_j^{(v)}}{2\sqrt{\lambda \alpha_uR_u(t)}}\big)\big \rangle_t\nonumber\\
&\quad-\frac{\lambda}{n} \EE\big\langle\sum_{i,j=1}^{n_u,n_v}\big(\frac{(u_iv_j)^2}{2n}-\frac{u_iv_jU_iV_j}{n}-\frac{u_iv_jZ_{ij}}{2\sqrt{\lambda (1-t)n}}\big)\big \rangle_t.\nonumber
\end{talign*}
Let $(\bu',\bv')$ be jointly drawn from the posterior \eqref{post_t} and this independently from $(\bu,\bv)$, itself also drawn from the same posterior. We now integrate by part the Gaussian noise variables using the elementary formula $\EE[Za(Z)]\!=\!\EE[a'(Z)]$ for $Z\!\sim\!{\cal N}(0,1)$ and for continuously differentiable $a$ such that these expectations are well-defined. This leads to
\begin{talign*}
	f_{n,\epsilon}'(t)&=\frac{\lambda \alpha_vm_v(t)}{n}\EE\big\langle\sum_{i=1}^{n_u}\big(-u_iU_i+\frac{u_iu_i'}{2}\big)\big \rangle_t\nonumber\\
&\quad+\frac{\lambda \alpha_um_u(t)}{n}\EE\big\langle\sum_{j=1}^{n_v}\big(-v_jV_j+\frac{v_jv_j'}{2}\big)\big \rangle_t\nonumber\\
&\quad-\frac{\lambda}{n} \EE\big\langle\sum_{i,j=1}^{n_u,n_v}\big(-\frac{u_iv_jU_iV_j}{n}+\frac{u_iu_i'v_jv_j'}{2n}\big)\big \rangle_t.
\end{talign*}
We now use the following identities $\EE\langle u_iU_i \rangle_t=\EE\langle u_iu_i' \rangle_t$ and $\EE\langle v_jV_j \rangle_t=\EE\langle v_jv_j' \rangle_t$. 
These follow directly from the following identity
which is nothing more than a direct consequence of Bayes formula (see \cite{2017arXiv170200473M,barbier_ieee_replicaCS} for a proof): $\EE\langle g(\bu, \bv,\bU,\bV) \rangle_t = \EE\langle g(\bu, \bv,\bu',\bv') \rangle_t$
%
%
for any continuous bounded function $g$. Thus
\begin{talign*}
f_{n,\epsilon}'&(t)=-\frac{\lambda \alpha_vm_v(t)}{n}\EE\big\langle\sum_{i=1}^{n_u}\frac{u_iU_i}{2}\big \rangle_t\nonumber\\
&\,-\frac{\lambda \alpha_um_u(t)}{n}\EE\big\langle\!\sum_{j=1}^{n_v}\frac{v_jV_j}{2}\big \rangle_t+\frac{\lambda}{n} \EE\big\langle\sum_{i,j=1}^{n_u,n_v}\frac{u_iv_jU_iV_j}{2n}\big \rangle_t\nonumber\\
&\ \ \ \, =-\frac{\lambda \alpha_u \alpha_v}{2}\EE\big\langle m_v(t) Q_u+m_u(t) Q_v- Q_uQ_v\big\rangle_t
\end{talign*}
which is \eqref{dfdt}. $\QEDA$
\subsection{From $p=2$ to $p=3$: Proof of Theorem \ref{thm:RS_3}}
Let us now prove the second theorem using our previous findings, using again the adaptive interpolation method. We will start by proving an alternative version of the limit of the free energy, using an 
auxiliary potential:
\begin{talign}
&f_{\rm aux}^{(3)}(m_w,m_{uv};\lambda) \equiv  \frac{\lambda }{2}\alpha_u\alpha_v\alpha_wm_{uv} m_w \label{f3aux}\\
&+\alpha_w\widetilde f_w(\lambda \alpha_u\alpha_v m_{uv}) + {\adjustlimits \inf_{m_{u}}\sup_{m_{v}}}\, f_{\rm pot}^{(2)}(m_u,m_v;\lambda \alpha_wm_w)\nonumber		
\end{talign}
where the optimization is over $m_{u/v}\in[0,\rho_{u/v}]$.

\begin{proposition}[Auxiliary free energy formula] \label{inf_sup_inf_sup}
Fix $\lambda>0$. The average free energy of model \eqref{uvw} verifies
\begin{align*}
&\lim_{n\to \infty}\!f_n^{(3)}(\lambda)
\!=\!{\adjustlimits \inf_{m_w}\sup_{m_{uv}} }\, f_{\rm aux}^{(3)}(m_w,m_{uv};\lambda)
\end{align*}
with optimization over $m_{w}\in[0,\rho_w]$ and $m_{uv}\in[0,\rho_u\rho_v]$.
\end{proposition}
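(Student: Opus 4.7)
The plan is to apply the adaptive interpolation of Section~\ref{sec:partIII} in a form that harnesses the layered structure: conditional on $\bW$, the inference of $(\bU,\bV)$ in \eqref{uvw} is a matrix factorization problem whose limiting free energy is already pinned down by Theorem~\ref{thm:RS_2}. I would therefore introduce an interpolating model with three simultaneous channels,
\begin{align*}
Y_{ijk}^{(t)} &= \tfrac{\sqrt{\lambda(1-t)}}{n}\,U_iV_jW_k + Z_{ijk},\\
Y_k^{(w,t)} &= \sqrt{\lambda\alpha_u\alpha_v R_{uv}(t,\epsilon)}\,W_k + Z_k^{(w)},\\
Y_{ij}^{(uv,t)} &= \sqrt{\tfrac{\lambda\alpha_w R_w(t,\epsilon)}{n}}\,U_iV_j + Z_{ij}^{(uv)},
\end{align*}
with $R_w(t,\epsilon)=\epsilon_w+\int_0^t m_w(s)ds$ and $R_{uv}(t,\epsilon)=\epsilon_{uv}+\int_0^t m_{uv}(s)ds$, trial functions $m_w(s)\in[0,\rho_w]$ and $m_{uv}(s)\in[0,\rho_u\rho_v]$, and perturbation $\epsilon=(\epsilon_w,\epsilon_{uv})\in[s_n,2s_n]^2$. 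At $t=0$ this recovers the original tensor problem up to $O(s_n)$; at $t=1$ the order-3 channel vanishes and the model decouples into a scalar channel for $\bW$ contributing $\alpha_w\widetilde{f}_w(\lambda\alpha_u\alpha_v R_{uv}(1,\epsilon))$, and a matrix factorization subproblem for $(\bU,\bV)$ at SNR $\lambda\alpha_w R_w(1,\epsilon)$ whose asymptotic free energy equals $\inf_{m_u}\sup_{m_v} f_{\rm pot}^{(2)}(m_u,m_v;\lambda\alpha_w R_w(1,\epsilon))$ by Theorem~\ref{thm:RS_2}.

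Next I would compute $f_{n,\epsilon}'(t)$ by Gaussian integration by parts exactly as in Section~\ref{sec:dfdt}. The order-3 piece generates the triple overlap $Q_uQ_vQ_w$, while the scalar and matrix channels respectively generate $m_w Q_uQ_v$ and $m_{uv} Q_w$, yielding
\begin{align*}
f_{n,\epsilon}'(t) = -\tfrac{\lambda}{2}\alpha_u\alpha_v\alpha_w\bigl\{ m_{uv}(t)m_w(t) - \mathbb{E}\bigl\langle (Q_uQ_v - m_{uv}(t))(Q_w - m_w(t))\bigr\rangle_{t,\epsilon}\bigr\}.
\end{align*}
Integrating via the fundamental theorem of calculus produces a sum rule analogous to Lemma~\ref{sumRule}. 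For the upper bound I would take $m_w(t)=m_w$ constant and $m_{uv}(t,\epsilon)=\mathbb{E}\langle Q_uQ_v\rangle_{t,\epsilon}$; the remainder is killed modulo $o_n(1)$ by Cauchy--Schwarz together with concentration of $Q_w$ and of the product $Q_uQ_v$, giving $\limsup_n f_n^{(3)}(\lambda)\leq \inf_{m_w}\sup_{m_{uv}} f_{\rm aux}^{(3)}$. For the lower bound I would solve the Cauchy problem $\partial_t R_w=\mathbb{E}\langle Q_w\rangle_{t,\epsilon}$, $\partial_t R_{uv}=\mathbb{E}\langle Q_uQ_v\rangle_{t,\epsilon}$ with $R(0)=\epsilon$; Cauchy--Lipschitz provides a ${\cal C}^1$ solution and Liouville's formula guarantees $J_{n,\epsilon}(t)\geq 1$, so an adapted version of Lemma~\ref{concentration} applies. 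Concavity of $\widetilde{f}_w$ and of the matrix factorization limit in its SNR argument then allows Jensen's inequality to pull the $t$-integral inside, and the stationarity relation $\partial_{m_{uv}} f_{\rm aux}^{(3)}=0$ at the interpolated $m_{uv}(t,\epsilon)$ shows this choice realizes the inner $\sup_{m_{uv}}$, giving $\liminf_n f_n^{(3)}(\lambda)\geq \inf_{m_w}\sup_{m_{uv}} f_{\rm aux}^{(3)}$.

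The main obstacle is the concentration of the product overlap $Q_uQ_v$ in the interpolating model. The decomposition $Q_uQ_v - \mathbb{E}\langle Q_uQ_v\rangle = (Q_u - \mathbb{E}\langle Q_u\rangle)Q_v + \mathbb{E}\langle Q_u\rangle (Q_v - \mathbb{E}\langle Q_v\rangle)$, together with bounded supports, reduces the task to separate concentration of $Q_u$ and $Q_v$, which in turn requires the perturbation to force each of them individually. A natural fix is to replace the single scalar perturbation $\epsilon_{uv}$ by two independent side-channels acting on $\bU$ and $\bV$ as in Lemma~\ref{concentration}; the induced modification of $f_{n,\epsilon}$ is bounded by $Cs_n\to 0$ and therefore does not affect the sum rule in the limit. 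A secondary concern is the concavity of the matrix factorization limit in its SNR argument (required for Jensen in the lower bound), which follows from the concavity of $f_n^{(2)}$ in the signal-to-noise ratio, a standard consequence of the I-MMSE identity applied to the scalar channels underlying \eqref{uv}.
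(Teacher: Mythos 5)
Your overall architecture — the layered interpolating model, the $t$-derivative with the triple overlap, the sum rule obtained by plugging the Theorem~\ref{thm:RS_2} formula into $f_{n,\epsilon}(1)$, the upper bound with constant $m_w$, the Jensen step using concavity in the SNR — matches the paper, and your observation that concentration of $Q_uQ_v$ reduces to concentration of $Q_u$ and $Q_v$ separately via the telescoping decomposition is exactly the mechanism used (the paper simply goes directly to $m_{uv}(t,\epsilon)=\mathbb{E}\langle Q_u\rangle_{t,\epsilon}\,\mathbb{E}\langle Q_v\rangle_{t,\epsilon}$, which is asymptotically the same as your $\mathbb{E}\langle Q_uQ_v\rangle_{t,\epsilon}$). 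Your note about the I-MMSE origin of the concavity of $\lambda\mapsto\inf\sup f_{\rm pot}^{(2)}(\,\cdot\,;\lambda)$ is correct and matches the paper.

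However, your lower bound has a genuine gap in the choice of the ODE for $R_w$. You set $\partial_t R_w=\mathbb{E}\langle Q_w\rangle_{t,\epsilon}$, i.e.\ $m_w(t,\epsilon)=\mathbb{E}\langle Q_w\rangle_{t,\epsilon}$, and then assert that the stationarity $\partial_{m_{uv}} f_{\rm aux}^{(3)}(m_w(t,\epsilon),m_{uv}(t,\epsilon);\lambda)=0$ holds at the interpolated pair. It does not. Since the sum rule already pins $m_{uv}(t,\epsilon)$ to (approximately) $\mathbb{E}\langle Q_u\rangle_{t,\epsilon}\mathbb{E}\langle Q_v\rangle_{t,\epsilon}$, and since $\partial_{m_{uv}} f_{\rm aux}^{(3)}(m_w,m_{uv};\lambda)=\frac{\lambda}{2}\alpha_u\alpha_v\alpha_w\bigl[m_w+2\widetilde f_w'(\lambda\alpha_u\alpha_v m_{uv})\bigr]$, the only way to force the stationarity needed to turn $f_{\rm aux}^{(3)}(m_w(t,\epsilon),m_{uv}(t,\epsilon);\lambda)$ into $\sup_{m_{uv}}f_{\rm aux}^{(3)}(m_w(t,\epsilon),m_{uv};\lambda)$ is to \emph{engineer} $m_w(t,\epsilon)=-2\widetilde f_w'\bigl(\lambda\alpha_u\alpha_v\, m_{uv}(t,\epsilon)\bigr)$, which is what the paper does by taking $\partial_t R_w(t)=-2\widetilde f_w'\bigl(\lambda\alpha_u\alpha_v\,\mathbb{E}\langle Q_u\rangle_{t,\epsilon}\mathbb{E}\langle Q_v\rangle_{t,\epsilon}\bigr)$. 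Your choice $m_w=\mathbb{E}\langle Q_w\rangle_{t,\epsilon}$ is a different object: it is the overlap of $\bW$ under the full $t$-dependent posterior, which receives information from the tensor channel, the matrix channel, and the scalar channel; whereas $-2\widetilde f_w'(\lambda\alpha_u\alpha_v R_{uv}(t,\epsilon))$ is the overlap from the scalar $\bW$-channel alone. They coincide neither at intermediate $t$ nor a priori at $t=0$, so the Jensen step does not produce $\sup_{m_{uv}}$ and the bound $\liminf_n f_n^{(3)}\geq\inf\sup f_{\rm aux}^{(3)}$ does not follow from your ODE. Replacing your $\partial_t R_w$ by $-2\widetilde f_w'(\lambda\alpha_u\alpha_v m_{uv}(t,\epsilon))$ (and keeping $\partial_t R_{uv}=m_{uv}(t,\epsilon)$) fixes the argument; the required non-negativity of the Jacobian determinant still holds because $-2\widetilde f_w'$ is non-decreasing in its argument and $m_{uv}$ is non-decreasing in $R_w$.
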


Once Proposition \ref{inf_sup_inf_sup} will be proved, Theorem \ref{thm:RS_3} will simply follow from Lemma \ref{lem:SE_equiv} presented in appendix ($\widetilde{f}_u$, $\widetilde{f}_v$ and $\widetilde{f}_w$ are indeed strictly concave, differentiable, Lipschitz, non-increasing functions over $\R_+$ by Lemma 39 from \cite{2017arXiv170200473M}).
\subsubsection{The ``layered'' interpolating model}
Similarly as before $t\!\in\![0,1]$, $\epsilon=(\epsilon_w,\epsilon_{uv})\in[s_n,2s_n]^2$ with $s_n\in(0,1/2]$ going to $0_+$ and the interpolating functions $R_{w/uv}(t,\epsilon)\equiv\epsilon_{w/uv}+\int_0^t m_{w/uv}(s)ds$ with $m_{w}(s)\!\in\![0,\rho_w]$ and $m_{uv}(s)\!\in\![0,\rho_u\rho_v]$. Consider this time the following observation model:
\begin{align*}
\begin{cases}
F^{(t)}_{ijk} &= \frac{\sqrt{\lambda(1-t)}}{n}\,U_iV_jW_k + Z_{ijk},\\
Y^{(uv,t)}_{ij} &= \sqrt{\frac{\lambda}{n}\alpha_w R_w(t)}\,U_iV_j + Z^{(uv)}_{ij},\\
Y^{(w,t)}_{k} &= \sqrt{\lambda \alpha_u\alpha_vR_{uv}(t)}\,W_k + Z^{(w)}_k,
\end{cases}
\end{align*}
for $1\!\le \!i \!\le\! n_u$, $1\!\le \!j \!\le\! n_v$ and $1\!\le\! k \!\le\! n_w$. Here $\bU\!\iid\! P_u$, $\bV\!\iid\! P_v$, $\bW\!\iid\! P_w$ and $\bZ$, $\bZ^{(uv)}$, $\bZ^{(w)}\!\iid\!{\cal N}(0,1)$. This model interpolates 
between an order $p\!+\!1\!=\!3$ tensor estimation at $t\!=\!0$ and $\epsilon=(0,0)$, to a model combined of a scalar estimation problem over $\bW$ under Gaussian noise and an order $p\!=\!2$ tensor joint estimation problem over $(\bU,\bV)$ at $t\!=\!1$. This model is ``layered'' in the 
sense that one decoupled scalar estimation problem is considered in addition of the order $p\!=\!2$ joint estimation problem that has {\it already} been treated analytically, see Theorem \ref{thm:RS_2}.

As previously, we associate to this model its posterior distribution $P_{t,\epsilon}(\bu,\bv,\bw)\!=\!{\cal Z}_{t,\epsilon}^{-1}P_u(\bu)P_v(\bv)P_w(\bw)\exp(-{\cal H}_{t,\epsilon})$ 
where the interpolating Hamiltonian ${\cal H}_{t,\epsilon}\!=\!{\cal H}_{t,\epsilon}(\bu,\bv,\bw)$ (again quenched variables are not indicated explicitly) is
\begin{talign*}
&{\cal H}_{t,\epsilon} = \lambda(1-t)\sum_{i,j,k=1}^{n_u,n_v,n_w}\big(\frac{(u_iv_jw_k)^2}{2n^2} - \frac{u_iv_jw_kU_iV_jW_k}{n^2} \nonumber\\
&\qquad\qquad\qquad\qquad\qquad\qquad-\frac{u_iv_jw_k Z_{ijk}}{\sqrt{\lambda n^2(1-t)}}\big) 
\nonumber \\ & + \lambda\alpha_w R_w(t)\sum_{i,j=1}^{n_u,n_v}\big(\frac{(u_iv_j)^2}{2n} - \frac{u_iv_jU_iV_j}{n} 
-\frac{u_iv_j Z_{ij}^{(uv)}}{\sqrt{\lambda n \alpha_w R_w(t) }}\big)\nonumber\\
&+\lambda \alpha_u\alpha_v  R_{uv}(t) \sum_{k=1}^{n_w}\big(\frac{w_k^2}{2} - w_kW_k-\frac{w_k Z_{k}^{(w)}}{\sqrt{\lambda\alpha_u\alpha_v R_{uv}(t)}}\big).
\end{talign*}
%
%
The Gibbs bracket $\langle - \rangle_{t,\epsilon}$ is, as before, the expectation w.r.t.\ this posterior. Finally the interpolating free energy is 
\begin{talign*}
f_{n,\epsilon}(t) =-\frac1n\mathbb{E}\ln\int dP_u(\bu)dP_v(\bv)dP_w(\bw) e^{-{\cal H}_t}.
\end{talign*}
\subsubsection{Adaptive interpolation}
The steps that we follow now are similar to sec. \ref{stoInt_uv}. The free energy $f_{n,\epsilon}(t)$ verifies, using identity \eqref{25} and $f_{n,\epsilon=(0,0)}(0)=f_n^{(3)}(\lambda)$,
\begin{align*}
\begin{cases}
f_{n,\epsilon}(0)= f_n^{(3)}(\lambda)+{\cal O}(s_n),\\
f_{n,\epsilon}(1)=f_n^{(2)}(\lambda\alpha_w R_w(1))+ \alpha_w \widetilde f_{w}(\lambda \alpha_u\alpha_v R_{uv}(1)).
\end{cases}	
\end{align*}
Here clearly appears the recursive construction of our proof that exploits the layered structure of the problem: Theorem \ref{thm:RS_2} 
allows to compute $f_n^{(2)}(\lambda\alpha_w R_w(1))$ (note the ``effective'' signal-to-noise $\lambda\alpha_w R_w(1)$) that we will then use to obtain $f_n^{(3)}(\lambda)$ through the adaptive interpolation method. By the very same steps as in sec.~\ref{sec:dfdt} we get 
\begin{talign}
	&f_{n,\epsilon}'(t)= -\frac{\lambda}{2}\alpha_u\alpha_v\alpha_w\big\{m_{uv}(t)m_w(t)\nonumber\\
&\qquad-\EE\big\langle (Q_uQ_v - m_{uv}(t))(Q_w-m_w(t))\big\rangle_{t,\epsilon}\big\}. \label{40}
\end{talign}
As mentionned in sec. \ref{concentration}, the 
concentration of overlaps Lemma \ref{concentration} generalizes to the present setting. 
Plugging the values of $f_{n,\epsilon}(t)$ at $t=0,1$ and \eqref{40} in the fundamental theorem of calculus and then using the concentration of $Q_u$, $Q_{v}$, $Q_w$ combined with Cauchy-Schwarz then yields
\begin{talign*}
&f_n^{(3)}(\lambda)= \smallO_n(1)+\frac{1}{s_n^2}\!\int d\epsilon \big[f_n^{(2)}(\lambda\alpha_w \int_0^1m_w(t)dt)\\
&+ \alpha_w \widetilde f_{w}(\lambda \alpha_u\alpha_v \int_0^1 m_{uv}(t)dt)\nonumber\\
&+\frac{\lambda }{2}\alpha_u\alpha_v\alpha_w\big\{\int_0^1  m_{uv}(t)m_w(t)dt\nonumber\\
&-{\textstyle \int_0^1 dt(\EE\langle Q_u\rangle_{t,\epsilon}\EE\langle Q_v\rangle_{t,\epsilon} \!-\! m_{uv}(t))(\EE\langle Q_w\rangle_{t,\epsilon}\!-\!m_w)\big\}}\big]. \nonumber
\end{talign*}
We again used the continuity and boundedness of $\widetilde f_{u/v/w}$. {\it Combining this identity with Theorem \ref{thm:RS_2}} (and using Cauchy-Schwarz and the boundedness of the potentials) leads to:
\begin{lemma}[Sum rule]\label{sumRule_2} 
Assume that $\epsilon=(\epsilon_w,\epsilon_{uv})\mapsto R(t,\epsilon)=(R_w(t,\epsilon),R_{uv}(t,\epsilon))$ satisfies the hypotheses of Lemma~\ref{concentration}, and choose $s_n\to 0_+$ according to this lemma. Assume that for all $t \in [0,1]$ and $\epsilon \in [s_n,2s_n]^2$ we have $m_{uv}(t,\epsilon) = \mathbb{E}\langle Q_u \rangle_{t,\epsilon}\mathbb{E}\langle Q_v \rangle_{t,\epsilon}$. Then:
\begin{talign}
&f_n^{(3)}(\lambda)= \frac{1}{s_n^2}\!\int d\epsilon\big[\frac{\lambda }{2}\alpha_u\alpha_v\alpha_w\int_0^1  m_{uv}(t,\epsilon)m_w(t,\epsilon)dt\label{sumRule_form2}\\
&\qquad+ \alpha_w \widetilde f_{w}(\lambda \alpha_u\alpha_v \int_0^1 m_{uv}(t,\epsilon)dt)\nonumber\\
&\qquad+{\adjustlimits \inf_{m_u}\sup_{m_v} }\, f_{\rm pot}^{(2)}(m_u,m_v;\lambda\alpha_w \int_0^1m_w(t,\epsilon)dt)\big]\!+\! \smallO_n(1)\nonumber
\end{talign}	
where $\smallO_n(1)$ is uniform in $t, m_{w},m_{uv},\epsilon$.
\end{lemma}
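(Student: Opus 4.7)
The plan is to take the identity displayed just above the lemma as the starting point and reduce it to \eqref{sumRule_form2} in two steps. That identity reads, schematically,
\begin{talign*}
f_n^{(3)}(\lambda) &= \smallO_n(1) + \frac{1}{s_n^2}\int d\epsilon\,\Big[f_n^{(2)}(\lambda_{\rm eff}(\epsilon)) \\
&\quad + \alpha_w\widetilde f_w\big(\lambda\alpha_u\alpha_v \int_0^1 m_{uv}(t,\epsilon)\,dt\big) \\
&\quad + \tfrac{\lambda}{2}\alpha_u\alpha_v\alpha_w \int_0^1 m_{uv}(t,\epsilon) m_w(t,\epsilon)\,dt - \mathrm{Rem}(\epsilon)\Big],
\end{talign*}
with $\lambda_{\rm eff}(\epsilon)\equiv\lambda\alpha_w\int_0^1 m_w(t,\epsilon)\,dt\in[0,\lambda\alpha_w\rho_w]$ and
\begin{talign*}
\mathrm{Rem}(\epsilon) = \tfrac{\lambda}{2}\alpha_u\alpha_v\alpha_w \int_0^1 \!dt\,\big(\EE\langle Q_u\rangle_{t,\epsilon}\EE\langle Q_v\rangle_{t,\epsilon} - m_{uv}(t,\epsilon)\big)\big(\EE\langle Q_w\rangle_{t,\epsilon} - m_w(t,\epsilon)\big).
\end{talign*}
Under the hypothesis $m_{uv}(t,\epsilon)=\EE\langle Q_u\rangle_{t,\epsilon}\EE\langle Q_v\rangle_{t,\epsilon}$ the first factor of $\mathrm{Rem}(\epsilon)$ vanishes pointwise in $(t,\epsilon)$, so the remainder disappears exactly. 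This is the same mechanism exploited in sec.~\ref{sec:matrix_bounds}: the extra functional freedom in the interpolating path is used to cancel the cross-term outright rather than merely to sign-control it.

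It then remains to replace $f_n^{(2)}(\lambda_{\rm eff}(\epsilon))$ by $\adjustlimits\inf_{m_u}\sup_{m_v} f_{\rm pot}^{(2)}(m_u,m_v;\lambda_{\rm eff}(\epsilon))$ via Theorem \ref{thm:RS_2}. Since $\lambda_{\rm eff}(\epsilon)$ depends on both $\epsilon$ and, through $m_w(\cdot,\epsilon)$, on $n$, the pointwise form of Theorem \ref{thm:RS_2} does not suffice; one needs uniform convergence on the compact interval $[0,\lambda\alpha_w\rho_w]$. This upgrade follows from a standard equicontinuity argument: $\lambda'\mapsto f_n^{(2)}(\lambda')$ is Lipschitz on $\R_+$ with constant $\tfrac{1}{2}\alpha_u\alpha_v\rho_u\rho_v$ uniformly in $n$ (differentiate \eqref{f} and apply Gaussian integration by parts together with the Nishimori identity, exactly as in sec.~\ref{sec:dfdt}), and the limiting map $\lambda'\mapsto\inf\sup f_{\rm pot}^{(2)}(\cdot;\lambda')$ is Lipschitz with the same constant by direct inspection of \eqref{f2}; combining these two Lipschitz bounds with the pointwise convergence of Theorem \ref{thm:RS_2} on a $\delta$-net of the interval yields uniform convergence at rate $\smallO_n(1)$. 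Substituting this uniform replacement and absorbing the new error into the existing $\smallO_n(1)$ produces \eqref{sumRule_form2}, with uniformity in $t,m_w,m_{uv},\epsilon$ inherited from the uniformity in Lemma \ref{concentration}, the Lipschitz bounds just stated, and the boundedness of $\widetilde f_w$ on $[0,\lambda\alpha_u\alpha_v\rho_u\rho_v]$.

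The main obstacle is precisely this uniform-in-SNR strengthening of Theorem \ref{thm:RS_2}: without it the $n\to\infty$ limit cannot be interchanged with the $\epsilon$-average because the effective SNR $\lambda_{\rm eff}(\epsilon)$ drifts with $n$. A secondary point that should be verified is that the preliminary identity displayed above really features $\EE\langle Q_u\rangle_{t,\epsilon}\EE\langle Q_v\rangle_{t,\epsilon}$ rather than $\EE\langle Q_uQ_v\rangle_{t,\epsilon}$ in $\mathrm{Rem}(\epsilon)$; this amounts to checking that $\EE\langle Q_uQ_v\rangle_{t,\epsilon}$ can be factored into $\EE\langle Q_u\rangle_{t,\epsilon}\EE\langle Q_v\rangle_{t,\epsilon}$ up to $\smallO_n(1)$, which follows from the individual $L^2$-concentration of $Q_u$ and $Q_v$ (the three-overlap extension of Lemma \ref{concentration} noted in sec.~\ref{concentration}) together with Cauchy–Schwarz and the boundedness of the overlaps. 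With these two ingredients in hand, Lemma \ref{sumRule_2} reduces to a mechanical chaining of estimates built on the preliminary identity and the uniform version of Theorem \ref{thm:RS_2}.
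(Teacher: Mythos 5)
Your proposal is correct and follows essentially the same route as the paper: the hypothesis $m_{uv}(t,\epsilon)=\EE\langle Q_u\rangle_{t,\epsilon}\EE\langle Q_v\rangle_{t,\epsilon}$ kills the cross-term exactly in the preliminary identity, and then $f_n^{(2)}(\lambda_{\rm eff}(\epsilon))$ is replaced by $\inf_{m_u}\sup_{m_v}f_{\rm pot}^{(2)}$ via Theorem~\ref{thm:RS_2}, with the errors absorbed into $\smallO_n(1)$ — which is precisely what the paper's terse line ``Combining this identity with Theorem \ref{thm:RS_2} (and using Cauchy--Schwarz and the boundedness of the potentials)'' encapsulates. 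Your explicit uniform-in-SNR Lipschitz/$\delta$-net argument and the factorization check for $\EE\langle Q_uQ_v\rangle$ are valid ways to justify two steps the paper leaves implicit (the latter being exactly the concentration plus Cauchy--Schwarz step used to produce the preliminary identity).
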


$\bullet$ {\it Upper bound:} Set $m_w(t)=m_w\in[0,\rho_w]$, and then $R = (R_w,R_{uv})$ as the solution $R(t,\epsilon)=(\epsilon_w+m_w t,\epsilon_{uv}+\int_0^t m_{uv}(s,\epsilon)ds)$ to the ODE: $\partial_t R_w(t) = m_w$, $\partial_t R_{uv}(t) = F_{uv}(t,R(t))\equiv\mathbb{E}\langle Q_u \rangle_{t,\epsilon}\mathbb{E}\langle Q_v \rangle_{t,\epsilon}$ (with values in $[0,\rho_u\rho_v]$), and $R(0) = \epsilon$. By the Cauchy-Lipschitz theorem $R(t,\epsilon)$ is unique and ${\cal C}^1$ (in both arguments). Liouville's formula for the Jacobian determinant of $\epsilon \mapsto R(t,
\epsilon)$ then implies $J_{n,\epsilon}(t) = \exp \{\int_0^t\partial_{R_{uv}} F_{uv}(s,R(s,\epsilon))ds\} \geq 1$ as $\partial_{R_{uv}} F_{uv}\ge 0$. By the same arguments as before $\epsilon \mapsto R(t,
\epsilon)$ is a ${\cal C}^{1}$ diffeomorphism. Recalling \eqref{f3aux}, Lemma.~\ref{sumRule_2} gives: 
\begin{talign*}
f_n^{(3)}(\lambda) =  \frac{1}{s_n^{2}}  \int d\epsilon \, f_{\rm aux}^{(3)}(m_w,\int_0^1 m_{uv}(t,\epsilon)dt;\lambda)+ \smallO_n(1).
\end{talign*}
Thus $\!\limsup_{n\!\to\infty}\!f_n^{(3)}\!(\lambda)\!\le\! { \inf_{m_w}\sup_{m_{uv}}}f_{\rm aux}^{(3)}(m_w,m_{uv};\lambda)$.

$\bullet$ {\it Lower bound:} Fix $R$ as the solution $R(t,\epsilon)=(\epsilon_w+\int_0^t m_w(s,\epsilon)ds,\epsilon_{uv}+\int_0^t m_{uv}(s,\epsilon)ds)$ to the ODE (recall \eqref{MFf_tensor}): $\partial_t R_w(t) = F_w(t,R(t))\equiv -2\widetilde f_w'(\lambda\alpha_u\alpha_v\mathbb{E}\langle Q_u\rangle_{t,\epsilon}\mathbb{E}\langle Q_v\rangle_{t,\epsilon})$ and $\partial_t R_{uv}(t) = F_{uv}(t,R(t)) \equiv \mathbb{E}\langle Q_u\rangle_{t,\epsilon}\mathbb{E}\langle Q_v\rangle_{t,\epsilon}$ with $R(0) = \epsilon$. The solutions verify $m_w(t,\epsilon) \in [0,\rho_w]$ and $m_{uv}(t,\epsilon)=\mathbb{E}\langle Q_u\rangle_{t,\epsilon}\mathbb{E}\langle Q_v\rangle_{t,\epsilon} \in [0,\rho_u\rho_v]$. As previously the Cauchy-Lipschitz theorem implies that $R(t,\epsilon)$ is a $\mathcal{C}^1$ function of both $t$ and $\epsilon$. The Liouville formula for the Jacobian determinant of the map $\epsilon \mapsto R(t,\epsilon)$ gives $J_{n,\epsilon}(t) = \exp\{\int_0^t\partial_{R_w} F_{w}(s,R(s,\epsilon))ds+\int_0^t\partial_{R_{uv}} F_{uv}(s,R(s,\epsilon))ds\} \geq 1$ by non-negativity of the partials. Again, we also have that the map $\epsilon \mapsto R(t,\epsilon)$ a ${\cal C}^1$ diffeomorphism. Lemma.~\ref{sumRule} then leads to formula \eqref{sumRule_form2} with $m_w$ and $m_{uv}$ solutions of the ODE above. Both $\widetilde f_{w}$ and $\inf_{m_u}\sup_{m_v} f_{\rm pot}^{(2)}(m_u,m_v;\,\cdot\,)$ are concave; the latter is a consequence of Theorem~\ref{thm:RS_2} combined with the concavity of $f_n^{(2)}(\lambda)$ (itself concave by $\lambda$-concavity of the mutual information for Gaussian channels and recalling \eqref{2MI}). Jensen's inequality thus yields 
\begin{eqnarray*}
\textstyle f_n^{(3)}(\lambda)\geq \frac{1}{s_n^{2}} \!\int d\epsilon \int_0^1  dt \, f_{\rm aux}^{(3)}(m_w(t,\epsilon),m_{uv}(t,\epsilon);\lambda) \!+\! \smallO_n(1).
\end{eqnarray*}
The same mechanism as in the previous lower bound (for the $p=2$ case) takes place here: 
\begin{align*}
f_{\rm aux}^{(3)}(m_w(t,\epsilon),m_{uv}(t,\epsilon);\lambda)=\sup_{m_{uv}}f_{\rm aux}^{(3)}(m_w(t,\epsilon),m_{uv};\lambda).	
\end{align*}
Indeed, $g_{m_w} : m_{uv} \mapsto f_{\rm aux}^{(3)}(m_w,m_{uv};\lambda)$ is concave with derivative $g_{m_w}'(m_{uv}) =\frac{\lambda}{2}\alpha_u\alpha_v\alpha_w[m_w + 2\widetilde f_w'(\lambda \alpha_u\alpha_v m_{uv})]$. The solutions then verify $g_{m_w(t,\epsilon)}'(m_{uv}(t,\epsilon)) = 0$, so by concavity $g_{m_w(t,\epsilon)}$ reaches its maximum at $m_{uv}(t,\epsilon)$. Therefore,
\begin{talign*}
f_n^{(3)}(\lambda)&\geq \frac{1}{s_n^{2}}\! \int d\epsilon \int_0^1  dt\,  \underset{m_{uv}}{\sup}\, f_{\rm aux}^{(3)}(m_w(t,\epsilon),m_{uv};\lambda) + \smallO_n(1)\\
&\geq {\adjustlimits\inf_{m_w}  \sup_{m_{uv}}}\, f_{\rm aux}^{(3)}(m_w,m_{uv};\lambda) + \smallO_n(1).
\end{talign*}
Taking the $\liminf_{n\to\infty}$ end the proof of the bound, and thus of Theorem \ref{thm:RS_3}. $\QEDA$

\appendix 
 \section{Some sup-inf formulas} \label{appendix_sup_inf}

 This appendix gathers some technical results regarding the manipulation of ``sup-inf'' expressions. The first lemma comes from \cite{barbier2017phase} (Appendix D).
 \begin{lemma} \label{lem:sup_inf}
	 Let $f$ and $g$ be two convex, non-decreasing Lipschitz functions on $\R_+$. 
	 Suppose that $g$ is strictly convex and differentiable.
	 For $q_1, q_2 \!\in\! \R_+$ we define $\psi(q_1,q_2) \!= \!f(q_1) \!+\! g(q_2) \!-\! q_1 q_2$.
	 Then
	 \begin{equation}\label{eq:sup_inf1}
		 \sup_{q_1 \geq 0} \inf_{q_2 \geq 0} \psi(q_1,q_2)
		 =
		 \sup_{q_2 \geq 0} \inf_{q_1 \geq 0} \psi(q_1,q_2)
		 =\!\!\!
		 \sup_{\substack{q_1 = g'(q_2)\\ q_2 = f'(q_1^+)}}\!\!\!
		 \psi(q_1,q_2).
	 \end{equation}
	 Moreover, the above extremas are achieved precisely on the same couples $(q_1,q_2)$ and $f$ is differentiable at $q_1$.
 \end{lemma}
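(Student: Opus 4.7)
The plan is to reduce each sup-inf to a one-dimensional optimization via partial Legendre transforms, then identify the optimizers with solutions of the critical-point system $\{q_1=g'(q_2),\,q_2=f'(q_1^+)\}$. Introduce the partial conjugates $g^{*}_+(q_1)\equiv\sup_{q_2\ge 0}[q_1 q_2-g(q_2)]$ and $f^{*}_+(q_2)\equiv\sup_{q_1\ge 0}[q_1 q_2-f(q_1)]$, so that $\inf_{q_2\ge 0}\psi(q_1,q_2)=f(q_1)-g^{*}_+(q_1)\eqdef\phi(q_1)$ and $\inf_{q_1\ge 0}\psi(q_1,q_2)=g(q_2)-f^{*}_+(q_2)\eqdef\xi(q_2)$. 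Lipschitz continuity of $f,g$ with constants $L_f,L_g$ forces $g^{*}_+(q_1)=+\infty$ for $q_1>L_g$ (and symmetrically for $f^{*}_+$), confining both outer suprema to compact intervals on which $\phi,\xi$ are continuous, so that both outer suprema are attained.

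The next step analyses $\sup_{q_1}\phi$. Strict convexity and differentiability of $g$ make the inner infimum in $\phi(q_1)$ uniquely attained at a continuous non-decreasing selection $q_2^{*}(q_1)$ satisfying either $q_2^{*}(q_1)=0$ (when $q_1\le g'(0)$) or the stationarity relation $g'(q_2^{*}(q_1))=q_1$. An envelope computation then yields the one-sided derivatives $\phi'(q_1^{\pm})=f'(q_1^{\pm})-q_2^{*}(q_1)$. At an interior maximizer $q_1^{*}$ one therefore has $f'(q_1^{*+})\le q_2^{*}(q_1^{*})\le f'(q_1^{*-})$, and the single-valuedness of $q_2^{*}$ forces $f'(q_1^{*+})=f'(q_1^{*-})=q_2^{*}(q_1^{*})$, establishing differentiability of $f$ at $q_1^{*}$ and the full critical-point system for $(q_1^{*},q_2^{*}(q_1^{*}))$. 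Since $\phi(q_1^{*})=\psi(q_1^{*},q_2^{*}(q_1^{*}))$, this gives $\sup_{q_1}\inf_{q_2}\psi=\psi(q_1^{*},q_2^{*}(q_1^{*}))\le\sup_{\mathrm{crit}}\psi$. The reverse inequality is immediate: for any critical pair $(q_1,q_2)$, inner optimality gives $\phi(q_1)=\psi(q_1,q_2)\le\sup_{q_1'}\phi(q_1')$, so the first equality in \eqref{eq:sup_inf1} holds.

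The symmetric sup-inf requires more care since $f$ is only convex: the $\arg\min$ in $\xi(q_2)$ can be a non-singleton closed interval $\partial f^{*}_+(q_2)$. Nevertheless, a monotone selection $q_1^{*}(q_2)$ (e.g.\ the right endpoint, which satisfies $q_2=f'(q_1^{*}(q_2)^{+})$) yields the same envelope identity $\xi'(q_2^{\pm})=g'(q_2^{\pm})-q_1^{*}(q_2)$ and therefore the same critical-point conditions at any maximizer $q_2^{**}$, so $\sup_{q_2}\inf_{q_1}\psi=\sup_{\mathrm{crit}}\psi$ as well. Concatenating the two identities yields the chain of equalities, and uniqueness of $q_2^{*}(q_1^{*})$ combined with the critical-point equation pinning $q_1$ as a function of $q_2$ through $g'$ shows that the extremizing couples from both sides coincide.

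The main obstacle will be the careful treatment of boundary cases (maximizers on the axes $q_1=0$ or $q_2=0$, where stationarity holds only one-sidedly) and the continuous-selection argument for the non-strictly-convex function $f$ in the symmetric half of the argument. The asymmetric hypothesis that $g$ is strictly convex and differentiable is essential throughout: it is what renders the inner infimum in $\phi$ single-valued and continuous, and this in turn is what makes both the envelope identity $\phi'(q_1^{\pm})=f'(q_1^{\pm})-q_2^{*}(q_1)$ and the squeezing step $f'(q_1^{*+})=f'(q_1^{*-})$ — and hence the \emph{a posteriori} differentiability of $f$ at the optimum — legitimate.
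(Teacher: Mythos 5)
The paper does not prove this lemma itself — it cites Appendix D of \cite{barbier2017phase} — so I am assessing your argument on its own terms. The overall blueprint (partial Legendre transforms, envelope derivatives, exclusion of bad cases via strict convexity of $g$) is the right one, and your treatment of $\sup_{q_1}\inf_{q_2}\psi$ is essentially sound; the only blemish there is a misattribution: the chain $f'(q_1^{*+})\le q_2^*(q_1^*)\le f'(q_1^{*-})$ collapses to an equality because \emph{convexity of $f$} gives the reverse inequality $f'(q_1^{*-})\le f'(q_1^{*+})$, not because $q_2^*$ is single-valued.

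The genuine gap is in the symmetric half, $\sup_{q_2}\inf_{q_1}\psi$. You write that a monotone selection $q_1^*(q_2)$ from the argmin interval yields ``the same envelope identity $\xi'(q_2^\pm)=g'(q_2^\pm)-q_1^*(q_2)$.'' This identity is false when the argmin $[a(q_2),b(q_2)]$ is not a singleton: since $f^*_+$ is convex, its subdifferential at $q_2$ is exactly $[a(q_2),b(q_2)]$, so the correct one-sided derivatives are $\xi'(q_2^-)=g'(q_2)-a(q_2)$ and $\xi'(q_2^+)=g'(q_2)-b(q_2)$; no single selection produces both. Moreover the parenthetical ``the right endpoint, which satisfies $q_2=f'(q_1^*(q_2)^+)$'' is backwards: the right endpoint $b$ of the argmin satisfies $q_2=f'(b^-)$ and only $q_2\le f'(b^+)$, with strict inequality when $f$ is kinked at $b$; it is the \emph{left} endpoint $a$ that satisfies $q_2=f'(a^+)$. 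As a consequence, first-order optimality of $q_2^{**}$ only gives $g'(q_2^{**})\in[a(q_2^{**}),b(q_2^{**})]$, and if it lands on the right endpoint $b$ where $f$ has a kink, the pair $(g'(q_2^{**}),q_2^{**})$ does \emph{not} satisfy $q_2=f'(q_1^+)$. Your argument never rules this out. The missing step is precisely where strict convexity of $g$ is used in an essential way: if $g'(q_2^{**})=b$ and $f'(b^+)>q_2^{**}$, then for $q_2\in(q_2^{**},f'(b^+))$ the unique argmaximizer of $q_1\mapsto q_2 q_1-f(q_1)$ is frozen at $b$, so $\xi'(q_2)=g'(q_2)-b>0$ because $g'$ is strictly increasing past $q_2^{**}$, and $\xi$ is strictly increasing there — contradicting maximality of $q_2^{**}$. (The same mechanism disposes of the $q_2=0$ and singleton-argmin boundary cases, which you flag as ``the main obstacle'' but do not resolve.) Without this exclusion argument the chain of equalities and the ``same couples'' claim are not established.
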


 \begin{lemma}\label{lem:deriv_sup_inf}
	 Let $f$ and $g$ be two convex, non-decreasing Lipschitz functions on $\R_+$. Suppose that $f$ and $g$ are differentiable and strictly convex. Then the function
	 \begin{equation}\label{eq:def_phi}
		 \varphi: t \geq 0 \mapsto 
		 \sup_{q_1 \geq 0} \inf_{q_2 \geq 0} f(tq_1) + g(t q_2) - t q_1 q_2
	 \end{equation}
	 is convex, Lipschitz and non-decreasing. Moreover
	 $\varphi'(0^+) = f'(0^+) g'(0^+)$ and for all $t > 0$:
	 
	 	\vspace{-7pt}
	 	\begin{small}
		 \begin{align*}
			 \varphi'(t^-) &= \min \{ q_1^*(t) q_2^*(t) \, | \, (q_1^*(t),q_2^*(t)) \ \text{optimal couple in \eqref{eq:def_phi}} \},
			 \\
			 \varphi'(t^+) &= \max \{ q_1^*(t) q_2^*(t) \, | \, (q_1^*(t),q_2^*(t)) \ \text{optimal couple in \eqref{eq:def_phi}} \}.
		 \end{align*}
	 	\end{small}
 \end{lemma}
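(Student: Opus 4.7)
My plan is to eliminate the inner infimum by Fenchel duality, recast $\varphi$ as a pointwise supremum of a convex family in $t$, and then deduce every claim from standard convex-envelope calculus. For $t > 0$, the substitution $v = tq_2$ gives
\[
  \inf_{q_2 \ge 0}\bigl[g(tq_2) - tq_1 q_2\bigr]
  = \inf_{v \ge 0}\bigl[g(v) - q_1 v\bigr]
  = -g^\star(q_1),
\]
where $g^\star(q_1) := \sup_{v \ge 0}[q_1 v - g(v)]$ is the Fenchel conjugate on $\R_+$. Since $g$ is $L_g$-Lipschitz and non-decreasing, $g^\star \equiv +\infty$ on $(L_g,\infty)$ and is finite, convex and continuous on $[0,L_g]$; hence, for all $t \ge 0$ (a direct check at $t=0$ giving $\varphi(0) = f(0)+g(0)$),
\[
  \varphi(t) = \sup_{q_1 \in [0, L_g]} h(t,q_1),
  \qquad h(t,q_1) := f(tq_1) - g^\star(q_1).
\]

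For each fixed $q_1 \in [0,L_g]$, the map $t \mapsto h(t,q_1)$ is convex (composition of the non-decreasing convex $f$ with a non-negative linear function of $t$), non-decreasing, and Lipschitz with constant $q_1 L_f \le L_f L_g$. These three properties pass to the supremum, yielding the first part of the lemma. For the one-sided derivatives at $t > 0$ I would invoke Danskin's envelope theorem on the compact set $[0,L_g]$: since $h$ is jointly continuous and differentiable in $t$ with $\partial_t h(t,q_1) = q_1 f'(tq_1)$,
\[
  \partial \varphi(t) = \overline{\operatorname{conv}}\{q_1^* f'(tq_1^*) : q_1^* \in \operatorname{argmax}\, h(t,\cdot)\} = [\varphi'(t^-), \varphi'(t^+)].
\]
Lemma~\ref{lem:sup_inf} (applied with $\tilde f(q_1)=f(tq_1)$ and $\tilde g(q_2')=g(q_2')$ after setting $q_2'=tq_2$) identifies the optimal couples of the original sup--inf with the pairs satisfying $q_1^* = g'(tq_2^*)$ and $q_2^* = f'(tq_1^*)$; the latter relation gives $q_1^* f'(tq_1^*) = q_1^* q_2^*$, and taking the min and max over this bounded subset of $\R$ yields exactly the claimed formulas for $\varphi'(t^\mp)$.

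For $\varphi'(0^+)$: every optimal couple lies in the compact box $[0,L_g]\times[0,L_f]$, so in the fixed-point relations the arguments $tq_1^*(t)$ and $tq_2^*(t)$ both vanish as $t\to 0^+$; continuity of $f'$ and $g'$ (a consequence of strict convexity) forces $q_1^*(t) = g'(tq_2^*(t)) \to g'(0^+)$ and $q_2^*(t) = f'(tq_1^*(t)) \to f'(0^+)$, whence $\varphi'(t^+) = q_1^*(t)q_2^*(t) \to f'(0^+)g'(0^+)$. By right-continuity of the right derivative of a convex function this limit equals $\varphi'(0^+)$, completing the proof. The main technical obstacle I anticipate is the clean application of Danskin's theorem: one must check that the arg max is a nonempty compact set (which follows from upper semicontinuity of $h(t,\cdot)$ on the compact $[0,L_g]$) and that the closed convex hull of the associated $t$-gradients exhausts the whole subdifferential interval $[\varphi'(t^-), \varphi'(t^+)]$. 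This is standard convex analysis but is essentially the only non-routine ingredient; everything else reduces to bookkeeping with the fixed-point relations and the boundedness afforded by the Lipschitz hypotheses.
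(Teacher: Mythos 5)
Your proof is correct and follows essentially the same route as the paper: both eliminate the inner infimum via the Fenchel conjugate $g^\star$, rewrite $\varphi$ as a supremum of convex functions of $t$ over a compact set, apply an envelope theorem (you cite Danskin, the paper cites Corollary 4 of Milgrom--Segal, which is the same tool) to extract the one-sided derivatives, and then use Lemma~\ref{lem:sup_inf} to translate $q_1^* f'(tq_1^*)$ into $q_1^* q_2^*$. Your treatment of $\varphi'(0^+)$ via the fixed-point relations and right-continuity of the right derivative simply spells out what the paper labels ``analogous.''
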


 \begin{proof}
	 Let
	$g^*: x \in  \R \mapsto \sup_{y \in \R_+} \{ xy - g(y) \} \in \R \cup \{+ \infty\}$
	be the Fenchel-Legendre transform of $g$.
	 For $t \geq 0$
	 \begin{equation} \label{eq:phi_sup}
		 \varphi(t)= \sup_{q_1 \geq 0} f(tq_1) - g^*(q_1)
	 \end{equation}
	 (this is true for $t\!>\!0$ and one can verify easily that it is also true for $t\!=\!0$ because $g^*$ is non-decreasing).
	 $\varphi$ is thus a suppremum of convex functions and is therefore convex. 
	 
	 Let $0\!<\!a\!<\!b$.
	 For all $t \!\in \![a,b]$, Lemma \ref{lem:sup_inf} gives that the supremum \eqref{eq:phi_sup} is achieved on a compact set (that does not depend on $t$, but only on $a,b$). Thus Corollary 4 from \cite{milgrom2002envelope} gives that 
	 \begin{align*}
		 \varphi'(t^-) &= \min \{ q_1^*(t) f'(tq_1^*(t)) \, | \, q_1^*(t) \ \text{optimal in \eqref{eq:phi_sup}} \},
		 \\
		 \varphi'(t^+) &= \max \{ q_1^*(t) f'(tq_1^*(t)) \, | \, q_1^*(t) \ \text{optimal in \eqref{eq:phi_sup}} \}.
	 \end{align*}
	 Using Lemma \ref{lem:sup_inf} one see that $f'(t q_1^*(t))$ is equal to the $q_2^*(t)$ from the proposition. $\varphi'(0^+)$ is computed analogously.
	 
	 The fact that $\varphi$ is Lipschitz and non-decreasing follows from the expression of its left- and right-derivatives. Indeed, we know by Lemma \ref{lem:sup_inf} that the optimal couples on \eqref{eq:def_phi} are in $[0,\sup_{x \geq 0} g'(x)] \times [0, \sup_{x \geq 0} f'(x^+)]$.
 \end{proof}
\vspace{4pt}
 \begin{lemma} \label{lem:SE_equiv}
	 Let $f_1, f_2, f_3$ be 3 strictly convex, non-decreasing, differentiable, Lipschitz functions from $\R_+$ to $\R$. Then
	 
	 	\vspace{-15pt}
	 	\begin{small}
		 \begin{align*}
			 &\sup_{q_3 \geq 0} \inf_{r \geq 0} f_3(r) + \sup_{q_1 \geq 0} \inf_{q_2\geq 0} \left\{ f_1(q_2 q_3) + f_2(q_1 q_3) - q_1 q_2 q_3 \right\} - r q_3
			 \\
			 &=
			 \sup_{\substack{q_1 = f'_1(q_2 q_3) \\ q_2 = f'_2(q_1q_3) \\ q_3 = f'_3(q_1 q_2)}} f_1(q_2 q_3) + f_2(q_1 q_3) + f_3(q_1 q_2) - 2 q_1 q_2 q_3.
		 \end{align*}
	 \end{small}
 \end{lemma}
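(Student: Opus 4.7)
The plan is to nest two applications of Lemma \ref{lem:sup_inf}, using Lemma \ref{lem:deriv_sup_inf} as the bridge between them. Denote the LHS by $A$ and the RHS by $B$, and introduce the inner function
\begin{equation*}
\Phi(q_3) \defeq \sup_{q_1 \geq 0} \inf_{q_2 \geq 0} \bigl\{f_1(q_2 q_3) + f_2(q_1 q_3) - q_1 q_2 q_3\bigr\},
\end{equation*}
so that $A = \sup_{q_3 \geq 0} \inf_{r \geq 0} \{\Phi(q_3) + f_3(r) - q_3 r\}$. The strategy is first to characterize $\Phi$ and its regularity via Lemma \ref{lem:sup_inf} and Lemma \ref{lem:deriv_sup_inf}, and then to apply Lemma \ref{lem:sup_inf} a second time to the outer sup-inf involving $\Phi$ and $f_3$.

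For fixed $q_3 > 0$, the change of variables $\tilde q_2 = q_2 q_3$ recasts $\Phi(q_3)$ as $\sup_{q_1 \geq 0} \inf_{\tilde q_2 \geq 0} \{f_2(q_1 q_3) + f_1(\tilde q_2) - q_1 \tilde q_2\}$, which matches Lemma \ref{lem:sup_inf} with $f(\cdot) = f_2(\cdot \, q_3)$ and $g = f_1$ (the required strict convexity, differentiability, monotonicity and Lipschitz properties of $g$ hold by hypothesis on $f_1$). Undoing the substitution gives
\begin{equation*}
\Phi(q_3) = \sup_{\substack{q_1 = f_1'(q_2 q_3) \\ q_2 = f_2'(q_1 q_3)}} \bigl\{f_1(q_2 q_3) + f_2(q_1 q_3) - q_1 q_2 q_3\bigr\}.
\end{equation*}
Lemma \ref{lem:deriv_sup_inf} applied with $t = q_3$, $f = f_2$ and $g = f_1$ then shows that $\Phi$ is convex, Lipschitz and non-decreasing on $\mathbb{R}_+$, with $\Phi'(q_3^+) = \max\{q_1^\ast q_2^\ast : (q_1^\ast, q_2^\ast) \text{ optimal in } \Phi(q_3)\}$ at each $q_3 > 0$.

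For the bound $A \leq B$, apply Lemma \ref{lem:sup_inf} to $A$ with $f \leftrightarrow \Phi$ and $g \leftrightarrow f_3$: this yields $A = \sup\{\Phi(q_3) + f_3(r) - q_3 r\}$ over $\{q_3 = f_3'(r),\ r = \Phi'(q_3^+)\}$. At each such $(q_3, r)$, pick the maximizing optimal inner pair $(q_1, q_2)$ supplied by Lemma \ref{lem:deriv_sup_inf}; then $r = q_1 q_2$ and the triple $(q_1, q_2, q_3)$ lies on the constraint set of $B$, with common value $f_1(q_2 q_3) + f_2(q_1 q_3) + f_3(q_1 q_2) - 2 q_1 q_2 q_3$. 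For the converse $B \leq A$, take any triple $(q_1^\ast, q_2^\ast, q_3^\ast)$ satisfying the three equations of $B$ and set $r^\ast = q_1^\ast q_2^\ast$: Lemma \ref{lem:sup_inf} applied to $\Phi(q_3^\ast)$ shows that $(q_1^\ast, q_2^\ast)$ is optimal there, while $q_3^\ast = f_3'(r^\ast)$ makes $r^\ast$ the unique minimizer of $r \mapsto f_3(r) - q_3^\ast r$, so $\inf_{r \geq 0}\{\Phi(q_3^\ast) + f_3(r) - q_3^\ast r\}$ equals the $B$-value at the triple and is therefore bounded above by $A$. The main delicate point is that the critical-point form of $A$ from Lemma \ref{lem:sup_inf} retains only the largest $q_1 q_2$ among optimal inner pairs via $r = \Phi'(q_3^+)$, so the direction $B \leq A$ must be argued by returning to the original inf in $A$ rather than to its critical-point characterization; the boundary case $q_3 = 0$ is handled separately by continuity together with the identity $\inf_{r \geq 0} f_3(r) = f_3(0)$ coming from the non-decreasing nature of $f_3$.
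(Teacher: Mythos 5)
Your proof is correct and follows the same overall architecture as the paper's: introduce the scalar function $\Phi$ (the paper's $\varphi$) summarizing the inner $\sup$-$\inf$, use Lemma \ref{lem:deriv_sup_inf} for its regularity and for the identity $\Phi'(q_3^+)=\max\{q_1 q_2\}$ over optimal inner pairs, and apply Lemma \ref{lem:sup_inf} to the outer $\sup_{q_3}\inf_r$ problem. For $A\le B$ your argument is essentially the paper's (pick the maximizing optimal inner pair so that $r=q_1q_2$ and the resulting triple is $B$-feasible with matching value). For $B\le A$, however, you take a genuinely different and somewhat leaner route. The paper first proves a strengthening of Lemma \ref{lem:sup_inf} --- the critical-point supremum also equals the supremum over the larger set cut out by the single constraint $q_1=g'(q_2)$ --- applies it twice to get a slack critical-point representation of $A$, and then recovers $B$ by adding back the dropped constraints. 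You instead return to the raw $\sup_{q_3}\inf_r$ form of $A$: a $B$-feasible triple produces $r^*=q_1^*q_2^*$ which, thanks to $q_3^*=f_3'(r^*)$ and strict convexity of $f_3$, attains $\inf_r\{f_3(r)-q_3^*r\}$, so the objective at that triple is immediately bounded above by $A$. This dispenses with the paper's auxiliary strengthening.

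One small imprecision in your $B\le A$ step: you assert that ``$(q_1^*,q_2^*)$ is optimal in $\Phi(q_3^*)$'' solely because it satisfies the two critical-point equations. Lemma \ref{lem:sup_inf} only guarantees that the \emph{supremum} over the critical-point set equals $\Phi(q_3^*)$, not that every critical pair attains it; when $g'\circ f'$ admits several fixed points they may sit at different objective heights. Fortunately your argument only needs the one-sided inequality that always holds: for a critical pair, $q_2^*$ minimizes the inner problem at $q_1^*$, so $f_1(q_2^*q_3^*)+f_2(q_1^*q_3^*)-q_1^*q_2^*q_3^*\le\Phi(q_3^*)$, and combining this with $r^*$ attaining the $r$-infimum gives directly $B\text{-value}\le\inf_{r\ge 0}\{\Phi(q_3^*)+f_3(r)-q_3^*r\}\le A$. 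So replace ``equals the $B$-value at the triple'' by ``is at least the $B$-value at the triple''; the conclusion is unchanged.
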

 \vspace{4pt}
 \begin{proof}
	 Let us define
	 $$
	 \varphi: q_3 \geq 0 \mapsto \sup_{q_1 \geq 0} \inf_{q_2\geq 0} \left\{ f_1(q_2 q_3) + f_2(q_1 q_3) - q_1 q_2 q_3 \right\} .
	 $$
	 We know by Lemma \ref{lem:deriv_sup_inf} that $\varphi$ is convex, Lipschitz and non-decreasing over $\R_+$.

	 We will first prove that in the setting of Lemma \ref{lem:sup_inf}, all the quantities of \eqref{eq:sup_inf1} are equal to $\sup_{q_1=g'(q_2)} \psi(q_1,q_2)$.
	 Obviously,
	 \begin{equation}\label{eq:first_ineq}
\sup_{q_1=g'(q_2)} \psi(q_1,q_2) \geq \sup_{\substack{q_1 = g'(q_2)\\ q_2 = f'(q_1^+)}}\!\!\!  \psi(q_1,q_2) \,.
	 \end{equation}
	 Now, let $q_1,q_2 \!\geq\! 0$ such that $q_1 \!=\! g'(q_2)$. The function $r \mapsto \psi(q_1,r)$ is convex and its derivative at $r\!=\!q_2$ vanishes. Thus
	 $$
	 \psi(q_1,q_2) = \inf_{r \geq 0} \psi(q_1,r) \leq \sup_{r_1 \geq 0} \inf_{r_2 \geq 0} \psi(r_1,r_2)
	 $$
	 which combined with \eqref{eq:first_ineq} and \eqref{eq:sup_inf1} gives that $ \sup_{q_1=g'(q_2)} \psi(q_1,q_2)$ is equal to \eqref{eq:sup_inf1}. We now apply this result twice to obtain
	 
	 	\vspace{-7pt}
	 	\begin{small}
		 \begin{align*}
			 &\sup_{q_3 \geq 0} \inf_{r \geq 0} f_3(r) \!+\! \sup_{q_1 \geq 0} \inf_{q_2\geq 0} \left\{ f_1(q_2 q_3) \!+\! f_2(q_1 q_3) \!-\! q_1 q_2 q_3 \right\} \!- \!r q_3
			 \\
			 &=
			 \!\!\!\!\sup_{q_3 = f_3'(r)} \sup_{q_1 = f_1'(q_2 q_3)}\!\!\!\! \left\{ f_3(r) \!+\!  f_1(q_2 q_3) \!+\! f_2(q_1 q_3) \!-\! q_1 q_2 q_3 \!-\! r q_3\right\} .
		 \end{align*}
	 \end{small}
	 Let us add two more constraints to the last supremums, namely ``$r\!=\!q_1 q_2$'' and ``$q_2\!=\!f_2'(q_1 q_2)$''. Adding constraints to a supremum cannot increase it, therefore
	 
	 	\vspace{-7pt}
	 	\begin{small}
		 \begin{align*}
			 &\sup_{q_3 \geq 0} \inf_{r \geq 0} f_3(r) + \sup_{q_1 \geq 0} \inf_{q_2\geq 0} \left\{ f_1(q_2 q_3) + f_2(q_1 q_3) - q_1 q_2 q_3 \right\} - r q_3
			 \\
			 &\geq
			 \!\!\! \sup_{\substack{q_3 = f_3'(r), \  q_1 = f_1'(q_2 q_3) \\ r = q_1 q_2, \ q_2 = f_2'(q_1q_3)}} \!\!\!\!\!\!\!\!\!\left\{ f_3(r) +  f_1(q_2 q_3) + f_2(q_1 q_3) - q_1 q_2 q_3 - r q_3\right\} 
			 \\
			 &=
			 \sup_{\substack{q_1 = f'_1(q_2 q_3) \\ q_2 = f'_2(q_1q_3) \\ q_3 = f'_3(q_1 q_2)}} f_1(q_2 q_3) + f_2(q_1 q_3) + f_3(q_1 q_2) - 2 q_1 q_2 q_3 .
		 \end{align*}
	 \end{small}
	 
	Let us now prove the converse bound. We apply Lemma \ref{lem:sup_inf} twice to obtain
	
		\vspace{-12pt}
	 	\begin{small}
		\begin{align*}
			&\sup_{q_3 \geq 0} \inf_{r \geq 0} f_3(r) + \sup_{q_1 \geq 0} \inf_{q_2\geq 0} \left\{ f_1(q_2 q_3) + f_2(q_1 q_3) - q_1 q_2 q_3 \right\} - r q_3
			\\
			&=
			\!\!\!\!\sup_{\substack{q_3 = f_3'(r) \\ r = \varphi'(q_3^+)}} f_3(r) + \!\!\!\sup_{\substack{q_1 = f_1'(q_2 q_3) \\ q_2 = f_2'(q_1q_3)}} \!\!\!\!\!\left\{ f_1(q_2 q_3) + f_2(q_1 q_3) - q_1 q_2 q_3 \right\} - r q_3
			\\
			&=
			\!\!\!\!\sup_{\substack{q_3 = f_3'(r) \\ r = \varphi'(q_3^+)}} \sup_{\substack{q_1 = f_1'(q_2 q_3) \\ q_2 = f_2'(q_1q_3)}} \!\!\!\!\!\!\!\!\left\{ f_3(r) +  f_1(q_2 q_3) + f_2(q_1 q_3) - q_1 q_2 q_3 - r q_3\right\} .
		\end{align*}
	\end{small}
	 Let now $(r^*,q^*_3)$ be a couple that achieves the first supremum (we know from Lemma \ref{lem:sup_inf} that such a couple exists). 
	 Let now $(q^*_1,q^*_2)$ be a couple that achieves the corresponding second supremum, for which the product $q^*_1 q^*_2$ is maximal.
	 By Lemma \ref{lem:deriv_sup_inf}, we have
	 $r^* = \varphi'(q_3^{*+}) = q^*_1 q^*_2$.
	 However, we know from Lemma \ref{lem:sup_inf} that this couple verifies $q^*_1 = f'_1(q^*_2 q^*_3)$ and $q^*_2 = f'_2(q^*_1 q^*_3)$. Thus
	 
	 \vspace{-10pt}
	 	\begin{small}
	 \begin{align*}
		 &\sup_{q_3 \geq 0} \inf_{r \geq 0} f_3(r)  + \sup_{q_1 \geq 0} \inf_{q_2\geq 0} \left\{ f_1(q_2 q_3) + f_2(q_1 q_3) - q_1 q_2 q_3 \right\} - r q_3
		 \\
		 &=
		 f_1(q_2^* q_3^*) + f_2(q_1^* q_3^*) + f_3(r^*) - q_1^* q_2^* q_3^* - r^* q_3^*
		 \\
		 &=
		 f_1(q_2^* q_3^*) + f_2(q_1^* q_3^*) + f_3(q_1^* q_2^*) - 2 q_1^* q_2^* q_3^*
		 \\
		 &\leq
		 \sup_{\substack{q_1 = f'_1(q_2 q_3) \\ q_2 = f'_2(q_1q_3) \\ q_3 = f'_3(q_1 q_2)}} f_1(q_2 q_3) + f_2(q_1 q_3) + f_3(q_1 q_2) - 2 q_1 q_2 q_3
	 \end{align*}
	 \end{small}
	 which concludes the proof.
 \end{proof}

\section*{Acknowledgments}
J.B acknowledges funding from SNSF grant 200021-156672. Part of this work was done while L.M visited EPFL.
\bibliographystyle{IEEEtran}
\bibliography{refs}
\end{document}